\documentclass[sigconf,authorversion]{aamas} 

\usepackage{balance} 
\usepackage[labelfont={bf}]{subcaption}
\usepackage{makecell}
\setcellgapes{1pt}
\usepackage{algorithm}
\usepackage{algpseudocode}
\usepackage{xcolor}
\usepackage{amsthm}
\newtheorem{proposition}{Proposition}
\newtheorem{remark}{Remark}

\usepackage{listings}
\usepackage[skins]{tcolorbox}
\definecolor{ShadowColor}{RGB}{30,150,190}
\definecolor{codegreen}{rgb}{0,0.6,0}
\definecolor{codegray}{rgb}{0.5,0.5,0.5}
\definecolor{codepurple}{rgb}{0.58,0,0.82}
\definecolor{backcolour}{rgb}{0.95,0.95,0.92}
\lstdefinestyle{mystyle}{
    backgroundcolor=\color{backcolour},   
    commentstyle=\color{codegreen},
    keywordstyle=\color{magenta},
    numberstyle=\tiny\color{codegray},
    stringstyle=\color{codepurple},
    basicstyle=\ttfamily\tiny,
    breakatwhitespace=false,         
    breaklines=true,                 
    captionpos=b,                    
    keepspaces=false,                 
    numbers=none,                    
    numbersep=5pt,                  
    showspaces=false,                
    showstringspaces=false,
    showtabs=false,                  
    tabsize=1
}

\makeatother

\makeatletter
\gdef\@copyrightpermission{
  \begin{minipage}{0.2\columnwidth}
   \href{https://creativecommons.org/licenses/by/4.0/}{\includegraphics[width=0.90\textwidth]{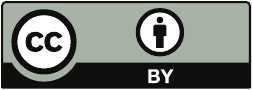}}
  \end{minipage}\hfill
  \begin{minipage}{0.8\columnwidth}
   \href{https://creativecommons.org/licenses/by/4.0/}{This work is licensed under a Creative Commons Attribution International 4.0 License.}
  \end{minipage}
  \vspace{5pt}
}
\makeatother

\setcopyright{ifaamas}
\acmConference[AAMAS '26]{Proc.\@ of the 25th International Conference
on Autonomous Agents and Multiagent Systems (AAMAS 2026)}{May 25 -- 29, 2026}
{Paphos, Cyprus}{C.~Amato, L.~Dennis, V.~Mascardi, J.~Thangarajah (eds.)}
\copyrightyear{2026}
\acmYear{2026}
\acmDOI{}
\acmPrice{}
\acmISBN{}

\makeatletter
\let\@acmBadgeL@image\@undefined
\let\@acmBadgeR@image\@undefined
\newcommand{\@acmBadgeL@image}{}
\newcommand{\@acmBadgeR@image}{}
\makeatother

\title[Building Large-Scale Drone Defenses]{Building Large-Scale Drone Defenses\\ from Small-Team Strategies}

\author{Grant Douglas}
\affiliation{
  \institution{Adelaide University}
  \city{Adelaide}
  \country{Australia}}
\email{grant.douglas@adelaide.edu.au}

\author{Stephen Franklin}
\affiliation{
  \institution{Saab Australia Pty Ltd}
  \city{Adelaide}
  \country{Australia}}

\author{Claudia Szabo}
\affiliation{
  \institution{Adelaide University}
  \city{Adelaide}
  \country{Australia}}
\email{claudia.szabo@adelaide.edu.au}

\author{Mingyu Guo}
\affiliation{
  \institution{Adelaide University}
  \city{Adelaide}
  \country{Australia}}
\email{mingyu.guo@adelaide.edu.au}

\begin{abstract}
Defending against large adversarial drone swarms requires coordination methods that scale effectively beyond conventional multi-agent optimisation. In this paper, we propose to scale strategies proven effective in small defender teams by integrating them as modular components of larger forces using our proposed framework. A dynamic programming (DP) decomposition assembles these components into large teams in polynomial time, enabling efficient construction of scalable defenses without exhaustive evaluation. Because a unit that is strong in isolation may not remain strong when combined, we sample across multiple small-team candidates. Our framework iterates between evaluating large-team outcomes and refining the pool of modular components, allowing convergence on increasingly effective strategies. Experiments demonstrate that this partitioning approach scales to substantially larger scenarios while preserving effectiveness and revealing cooperative behaviours that direct optimisation cannot reliably discover.
\end{abstract}

\keywords{Genetic Algorithms; Drone Swarms; Scalability}

\newcommand{\BibTeX}{\rm B\kern-.05em{\sc i\kern-.025em b}\kern-.08em\TeX}
\begin{document}

\pagestyle{fancy}
\fancyhead{}

\maketitle 


\section{Introduction}

Coordinating large teams of autonomous drones in adversarial environments is a formidable challenge. Even with modest numbers of agents, the combinatorial explosion of possible strategies quickly renders exhaustive search and naive optimisation infeasible. Reinforcement learning (RL) and skill-discovery methods have shown promise for small teams (e.g.,~\cite{RN80,RN191,RN276,RN274}), but they often scale poorly due to high sample complexity, non-stationarity, sparse rewards or computational overhead \cite{RN80}.

In this paper, we present a staged pipeline for building large-scale drone defenses from small team strategies. The key idea is to first identify and evolve promising behaviours in small defender teams, and then systematically reuses these strategies as modular components for larger forces. Dynamic programming (DP) is used to assemble these components in polynomial time, transforming what would otherwise be a super-exponential search into a tractable optimisation~\cite{RN460}. The resulting solution enables the efficient construction of coordinated swarm defenses that scale to realistic adversarial scenarios. While prior UAV swarm defense studies generally evaluate on no more than 12 agents (e.g., 10v10 in \cite{RN91} and 5v5 in \cite{RN89}), we demonstrate scalability to attacker swarms up to thirty agents and defender swarms up to 45 agents with stable execution, substantially exceeding the tested ranges of existing learning-based methods.

Our approach deliberately uses heuristics rather than direct policy learning. Heuristics offer several advantages critical for scalability: (i) they require limited manual scripting, enabling construction of diverse policies relatively quickly, (ii) can be expressed in compact mathematical forms for efficient JAX implementation~\cite{jax}, (iii) they can be simulated at scale without the overhead of RL training, and (iv) at runtime, execution only requires lightweight CPU-level evaluation of heuristic code - in contrast to neural network inference - enabling deployment on smaller onboard processors with modest memory. Prior work has explored heuristic evolution in multi-agent systems, including modular evolution~\cite{RN458,RN459} and messy genetic algorithms~\cite{goldberg1989messy}, but we adapt these ideas to swarm defense by framing heuristics as sub-chromosomes in evolutionary search.

The contributions of this work are both methodological and engineering-oriented. We present a hybrid GA–DP framework leveraging chromosome factorisation for efficient evolutionary search and demonstrate its scalability on large-scale drone defense simulations. Implemented fully in JAX and deployed on high-performance computing infrastructure, the framework integrates algorithmic structuring that exploits priors and hierarchical decomposition to avoid exhaustive search while remaining tractable, even across billions of simulation time steps. To expand the behavioural search space, we additionally use large language models to generate and diversify candidate heuristics. Together, these elements enable experiments at scales that would otherwise be computationally prohibitive, positioning the framework as a practical foundation for future swarm defense research and applications.

The primary contributions of this work are fourfold:
\begin{enumerate}
    \item A hybrid GA–DP framework for scalable defense. We propose a framework that combines genetic algorithms to evolve effective chromosomes in small-scale engagements with dynamic programming to assemble them into strategies for larger swarms. This allows tractable analysis at scales that naive evolutionary methods cannot reach~\cite{RN458,RN460}.
    \item DP integration with a hierarchical defense policy. By embedding DP-guided allocations within a hierarchy, our method coordinates defenders at the sub-team level while retaining heterogeneity at the agent level, enabling effective pursuit–evasion dynamics in complex swarm encounters.
    \item A chromosome factorisation perspective on evolutionary search. We show that the effective building blocks in this domain are entire chromosomes (combinations of heuristics) rather than individual genes, highlighting the importance of preserving high-order dependencies~\cite{goldberg1989messy}.
    \item Iterative refinement through large-scale evaluation. By feeding back the outcomes of large-scale evaluations into the small-team heuristics, our approach progressively improves the accuracy of heuristic estimates and converges toward high-performing defender strategies at scale. This iterative design provides solutions to the problem, as well as a proven methodology for their systematic improvement. 
\end{enumerate}

We demonstrate that this combined GA-DP approach scales to large defensive swarms, uncovering cooperative strategies that are difficult to evolve with conventional representations. Collectively, these contributions provide a practical and scalable pathway for optimizing coordinated behavior in adversarial multi-agent systems.

\section{Related Work}

We survey prior research on multi-agent coordination and strategy optimisation, highlighting approaches that inform our hierarchical heuristic framework.

\textit{Multi-Agent Defense and Pursuit–Evasion.}
Defending a target from coordinated attackers is a classic multi-agent control problem, often studied through pursuit–evasion games and adversarial swarm interactions. Algebraic and control-theoretic formulations yield insights for small engagements \cite{RN206}, but rapidly become intractable as team sizes increase \cite{RN177}. Reinforcement learning approaches \cite{RN80} can scale somewhat further, but face challenges of non-stationarity, sparse rewards, and high sample complexity, especially in competitive–cooperative domains such as drone defense \cite{RN386}. These limitations motivate methods that exploit structure or heuristics for tractability.

\textit{Evolutionary Algorithms in Multi-Agent Systems.}
Evolutionary computation offers a population-based search mechanism suited to the combinatorial complexity of multi-agent coordination. Genetic algorithms (GAs) have been applied to tasks such as role assignment, cooperative foraging, communication, and swarm robotics \cite{RN458,RN459}. However, scalability remains an issue, as the search space grows combinatorially with the number of agents. The messy genetic algorithm (mGA) \cite{goldberg1989messy} sought to preserve high-quality building blocks under deceptive landscapes, but its stochastic recombination remains expensive at scale. Our approach takes inspiration from these ideas but emphasises explicit chromosome factorisation and systematic reuse rather than probabilistic recombination.

\textit{Dynamic Programming (DP) for Team Optimisation.}
DP has been widely studied for coalition formation, task allocation, and sensor coverage in MAS \cite{RN460}. DP leverages problem structure to find exact or near-optimal decompositions, but faces exponential blow-up when applied directly to large swarms. Hybrid approaches that combine DP with other mechanisms are therefore attractive, as they can preserve structure while avoiding combinatorial explosion.

\textit{Dynamic Target Assignment.}
Classical weapon–target assignment (WTA) focuses on the discrete assignment of weapons to static targets to maximise expected damage or minimise surviving threats, typically solved via integer programming or dynamic programming \cite{Cai2006,KLINE2019226}. In contrast, our framework addresses continuously moving attackers with kinematic constraints, where assignments must account for multi-agent interactions. Furthermore, whereas DP in classical WTA computes exact or relaxed optimal solutions \cite{Ahner2013}, we employ DP as a combinatorial decomposition to guide evolutionary search, structuring chromosome sampling without requiring full optimality. This approach bridges discrete assignment principles with continuous, dynamic multi-agent engagements, enabling scalable strategy optimisation.

\textit{Hierarchical Policies in Swarm Control.}
Hierarchical organisation is a natural way to manage complexity in MAS. High-level policies assign roles or allocate agents to sub-problems, while low-level policies execute primitive behaviours such as pursuit or patrol. This paradigm has been explored both in reinforcement learning \cite{RN273,RN247,RN230} and in hybrid heuristic–learning frameworks. Our approach falls into this class but differs in two respects: (i) we employ heuristics as the low-level policies, chosen for their largely automatic generation enabling diversity and efficient JAX implementation, and (ii) we embed DP-based allocation within the hierarchy. Consequently, they facilitate coordination to emerge from structured assignments, enabling scalability without requiring each agent to learn full joint coordination from scratch.

To our knowledge, no prior work has combined evolutionary search, DP, and heuristic factorisation within a hierarchical framework for drone defense. This integration enables tractable scaling from small evolved strategies to large swarm engagements, filling a gap left by existing MARL, GA, and DP-only approaches.

\section{Approach}

\subsection{Problem Formulation and Setting}

We formalise the drone defense scenario as a multi-agent coordination problem, detailing agent dynamics, objectives, and interactions, and introduce our hierarchical heuristics solved via a hybrid genetic algorithm and dynamic programming. The environment consists of an open two-dimensional battlefield with a fixed critical asset (target) that the Blue team must defend (see Figure~\ref{fig:launchExample}). A Red team of attackers attempts to reach the target from randomised starting positions (radius and angle to target) following sinusoidal paths (randomised amplitudes and frequencies), introducing nonlinear and unpredictable motion that prevents defenders from relying on simple linear extrapolation. While not fully realistic, this design creates an informative testbed for evaluating hierarchical and heuristic coordination strategies. The objective of the Blue team is to maximise win rate - by preventing Red agents from reaching the asset - in contrast to much of the prior work, which often focuses on agent rewards \cite{RN91} or attrition \cite{RN199,RN89}.

\begin{figure}
\centering
\begin{subfigure}{.2\textwidth}
  \includegraphics[width=\linewidth]{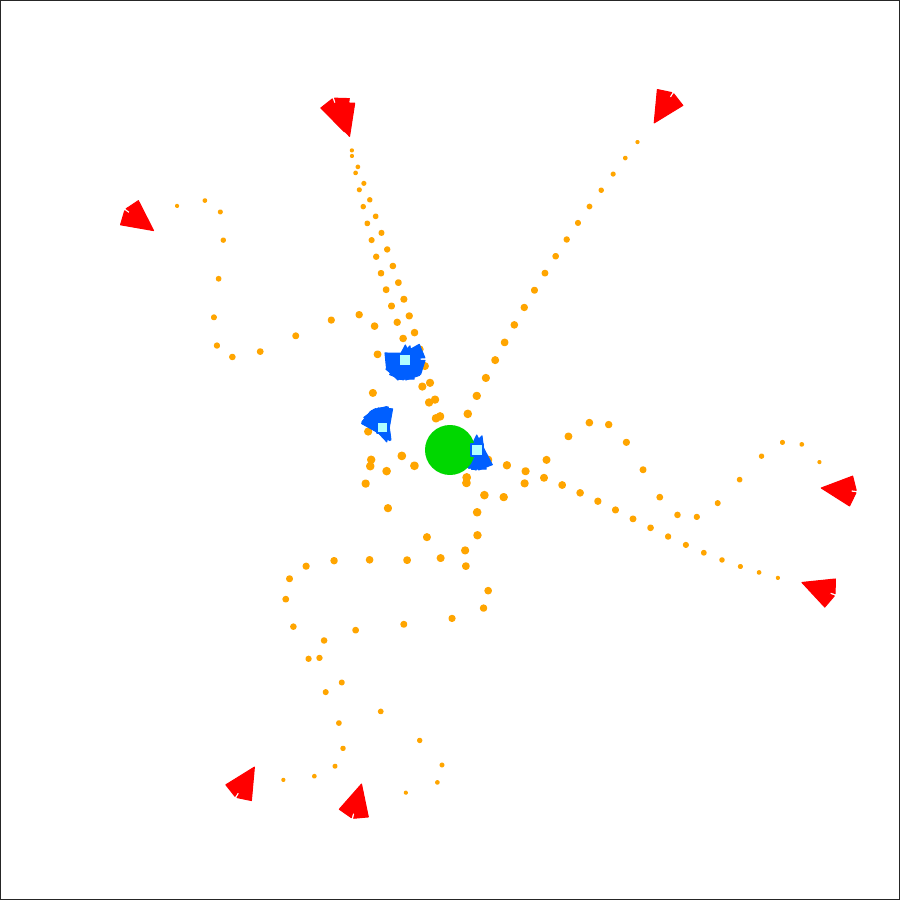}
  \caption{Launch phase}
\end{subfigure}\hspace{3mm}
\begin{subfigure}{.2\textwidth}
  \includegraphics[width=\linewidth]{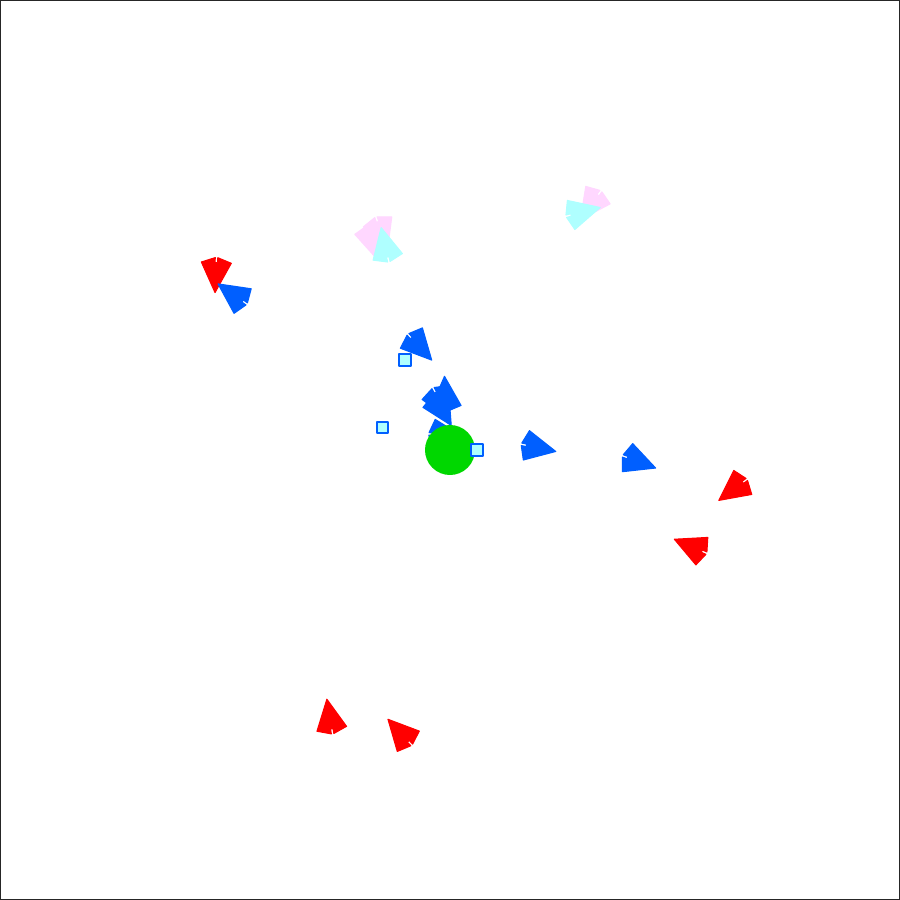}
  \caption{Mid-episode}
\end{subfigure}\hspace{3mm}
\caption{Scenario with 10 defenders (blue) protecting a target (green) from 8 attackers (red). In (a) attackers are shown with their offensive paths (orange) toward the target. In (b) collided drones that no longer participate have muted coloring.}
\label{fig:launchExample}
\end{figure}

\begin{figure*}[h]
  \centering
  \includegraphics[width=0.70\linewidth]{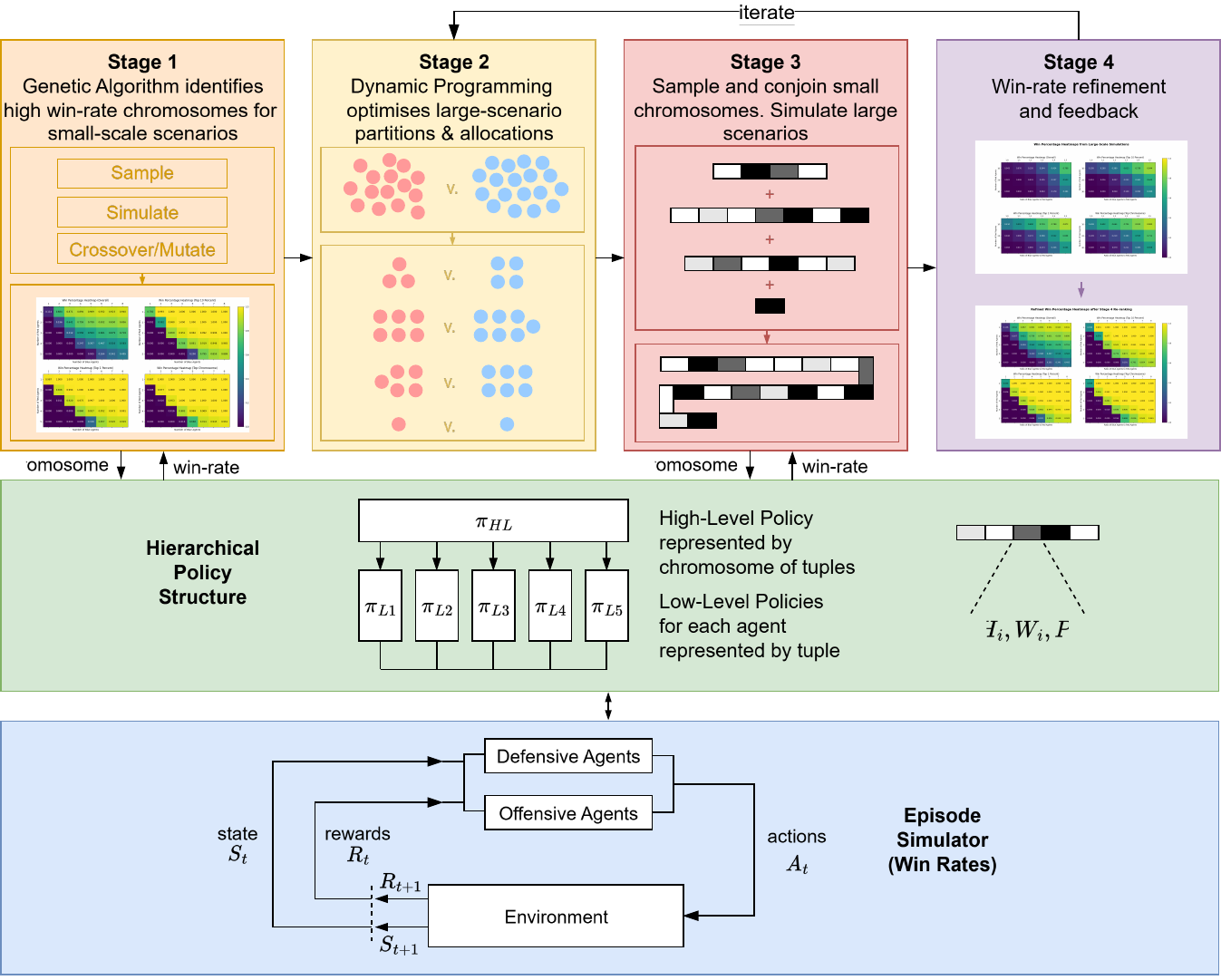}
  \caption{Overall hybrid GA–DP framework. The four-stage pipeline (Stage 1: GA evolution, Stage 2: dynamic programming allocation, Stage 3: chromosome sampling, Stage 4: refinement) operates over a hierarchical policy structure assigning heuristics to agents. These heuristics drive all decision-making. The simulator follows a standard MDP cycle, but no reinforcement learning is employed—coordination arises entirely from heuristic assignment and GA–DP optimisation.}
  \label{fig:Schematic}
\end{figure*}

Each agent’s state comprises its position $(x,y)$, heading $\theta$, and speed $v$, with actions defined by turn and acceleration commands constrained within fixed bounds. Agents have full visibility of all teammates and opponents. Blue team coordination emerges through a hierarchical heuristic framework: each agent executes a selected heuristic, while a hybrid genetic algorithm combined with dynamic programming optimises the heuristic assignments and parameters to maximise team-level performance.

\subsection{Overview}

Each Blue agent is controlled by a selected heuristic, which determines its low-level actions in the environment. Team-level coordination emerges through the assignment of heuristics to agents, rather than through explicit joint planning at every time step. To optimise these assignments and any associated heuristic parameters, we employ a hybrid genetic algorithm combined with dynamic programming, enabling efficient exploration of the combinatorial space of possible team configurations. The blue agents have no prior knowledge of attacker trajectories. Defensive strategies are optimised via the GA based on estimating strategies and evaluating them through interaction. In this way our approach is agnostic to the attacker model.

Figure \ref{fig:Schematic} illustrates the pipeline within the framework, showing how heuristic selection, environment interaction, and GA-DP optimisation are integrated to maximise the Blue team’s win rate. The simulation itself follows a standard Markov Decision Process (MDP) structure: at each time step, all agents submit actions, the environment transitions accordingly, and new observations are returned (along with potential rewards, though these are unused here). Unlike conventional reinforcement learning approaches, all decision-making in our framework is purely heuristic-driven, with coordination emerging from GA–DP optimisation rather than from policy-gradient or value-based learning.

This modular approach enables scaling by reusing and combining optimised heuristics without retraining.

\subsection{Hierarchical Policy as Chromosomes}

In our framework, the hierarchical policy for the Blue team is encoded as a chromosome suitable for GA optimisation. Each chromosome represents a full team configuration of $N$ agents, where each agent’s gene encodes its heuristic, spawn location, and heuristic-specific parameter. Formally, the chromosome $C$ can be written:

$$C=(g_1, g_2, \ldots, g_N), g_i = (H_i, W_i, P_i)$$

where $g_i$ is the gene of agent $i$, $H_i$ is the selected heuristic, $W_i$ is the spawn location, and $P_i$ is a heuristic-specific parameter. This representation allows the GA to explore combinations of agent strategies while maintaining the hierarchy of decision-making.

The hierarchical policy separates low-level behaviour, determined by the agent’s heuristic $H_i$, from high-level coordination, determined by the assignment of heuristics across the team $C$. During optimisation, crossover and mutation operate at the gene level, allowing recombination of heuristics and parameters between team configurations while preserving feasible individual behaviours. Dynamic programming complements the GA by efficiently evaluating the combinatorial space of heuristic assignments, enabling scalable optimisation of large teams. This chromosome-based representation ensures the search respects the hierarchy of decisions and allows the hybrid GA-DP to maximise team-level win rate effectively.

\subsection{Heuristics and Generation}

Each gene in a chromosome encodes a heuristic $H_i$ that governs the low-level behaviour of the corresponding Blue agent. The heuristics define action selection based on the current environment state, including the positions of Red agents, teammates, and the target. Heuristics are parameterised to allow flexibility in behaviour. By varying these parameters, a single heuristic template can generate multiple distinct strategies, increasing the diversity of candidate actions available for the GA-DP to optimise.

The use of heuristics is a deliberate design decision to avoid the complexity of skill discovery approaches such as feudal networks \cite{RN276}, variational auto-encoders \cite{RN275}, or general value functions \cite{RN274}, while still allowing explicit incorporation of action-masking and death-masking without additional training or the complexity of attention mechanisms. 

We design a set of modular heuristics that capture a range of tactical behaviours, such as intercepting attackers, maintaining defensive formations, and prioritising high-threat targets. Each heuristic encodes a distinct strategy, allowing the framework to combine them into larger teams while preserving interpretability and explicit action control. While one heuristic may be more familiar or conventional, focusing solely on it risks missing emergent cooperative behaviours. Accordingly, the heuristic set is curated for both diversity and effectiveness.

We utilize a combination of hand-crafted and LLM-generated heuristics. LLMs are employed to generate candidate heuristics due to their capacity to generate diverse strategies from concise textual prompts. The LLMs were provided a one-shot attempt at coding against our Python API, and made only minor coding errors. This allows us to substantially accelerate exploration of the heuristic design space and facilitate discovery of novel, and potentially effective, tactical behaviours with relatively less coding effort (only doing the JAX conversion manually) or other expert training. Unlike heuristics discovered through MARL, LLM-generated heuristics are immediately executable and generally interpretable. As all generated heuristics are assessed solely through quantitative simulation under identical conditions, any biases inherent in the language model are neutralised by the evolutionary selection process, ensuring only empirically effective strategies persist. A representative LLM prompt and the generated heuristic code is provided in the Supplementary Material.

\subsection{Stage 1 - GA for Chromosome Evolution}

In Stage 1 of our hybrid approach, a genetic algorithm (GA) is applied to small-scale team configurations—Red teams of 1–5 agents and Blue teams of 1–8 agents—to explore the space of hierarchical policies represented as chromosomes. Each chromosome encodes a full Blue team configuration, specifying the heuristic, spawn location, and parameterisation for each agent. Chromosome fitness is evaluated as the mean win rate over 128 simulated episodes.

The GA operates with a random initial population of 1024 chromosomes. Testing of Sobol~\cite{SOBOL196786} and Latin Hypercube~\cite{LHC} sampling population initialisation methods showed immaterial difference in performance. Evolution proceeds using 20\% elitism, tournament selection, and single-point crossover (at 70\%). Mutation introduces stochastic changes to genes to maintain population diversity. These operators allow the GA to efficiently explore a wide range of team-level strategies, exploiting high-performing chromosomes while exploring novel configurations.

Figure \ref{fig:baseHeatmaps} illustrates the GAs impact on team performance: heatmaps show win rates as a function of Blue and Red agent counts, comparing the random initial population with the GA-optimised population. The results demonstrate that GA evolution substantially improves team performance across small-scale configurations.

By restricting GA application to small-scale teams, we enable computationally feasible evaluation while producing a diverse set of high-quality chromosomes. These chromosomes serve as the candidate pool for Stage 2, where dynamic programming is applied to optimally assign heuristics and scale strategies to larger teams.

\begin{figure*}
\centering
\begin{subfigure}{.30\textwidth}
  \includegraphics[width=0.85\linewidth]{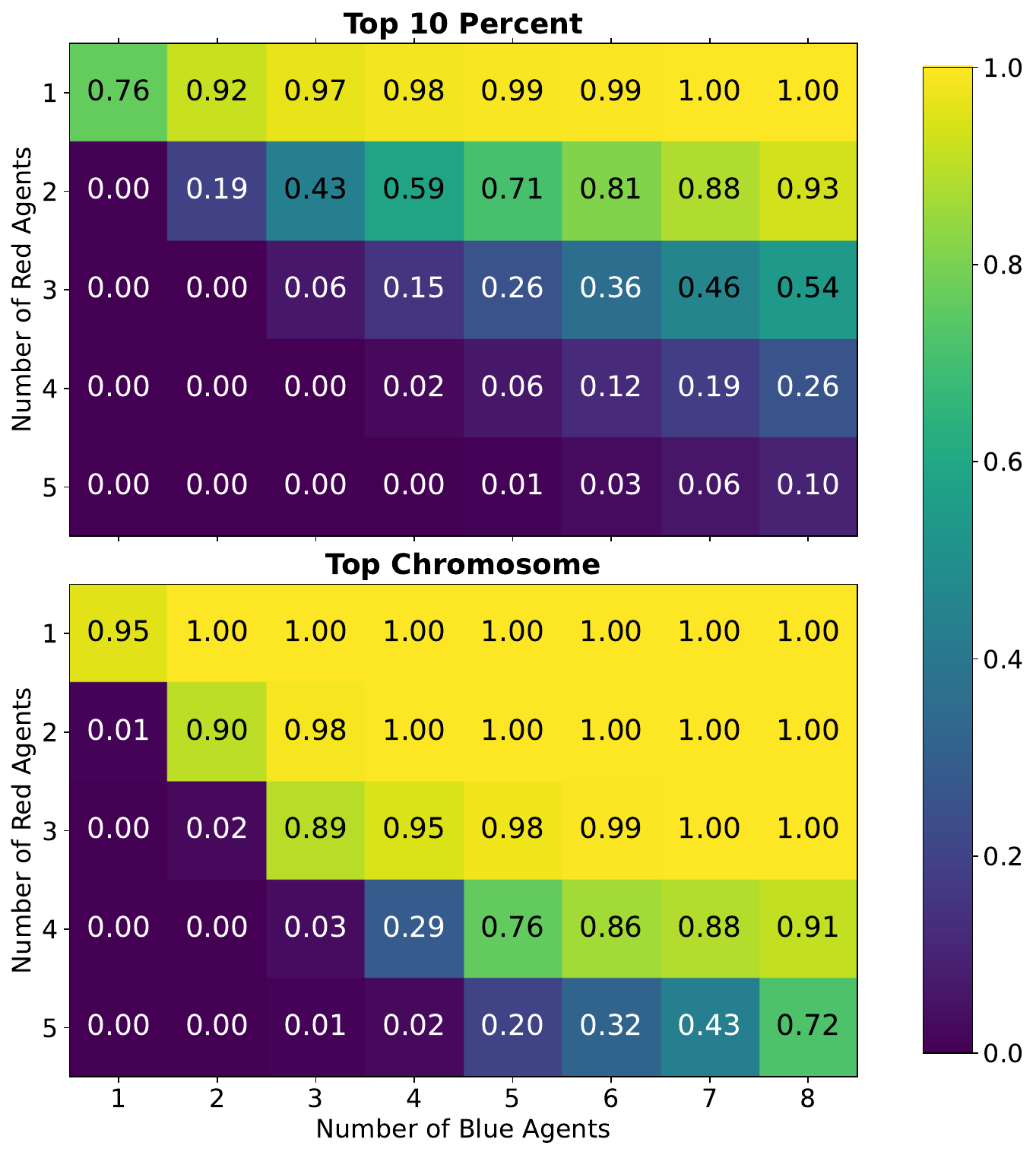}
  \caption{Heatmap without GA application.}
\end{subfigure}\hspace{3mm}
\begin{subfigure}{.30\textwidth}
  \includegraphics[width=0.85\linewidth]{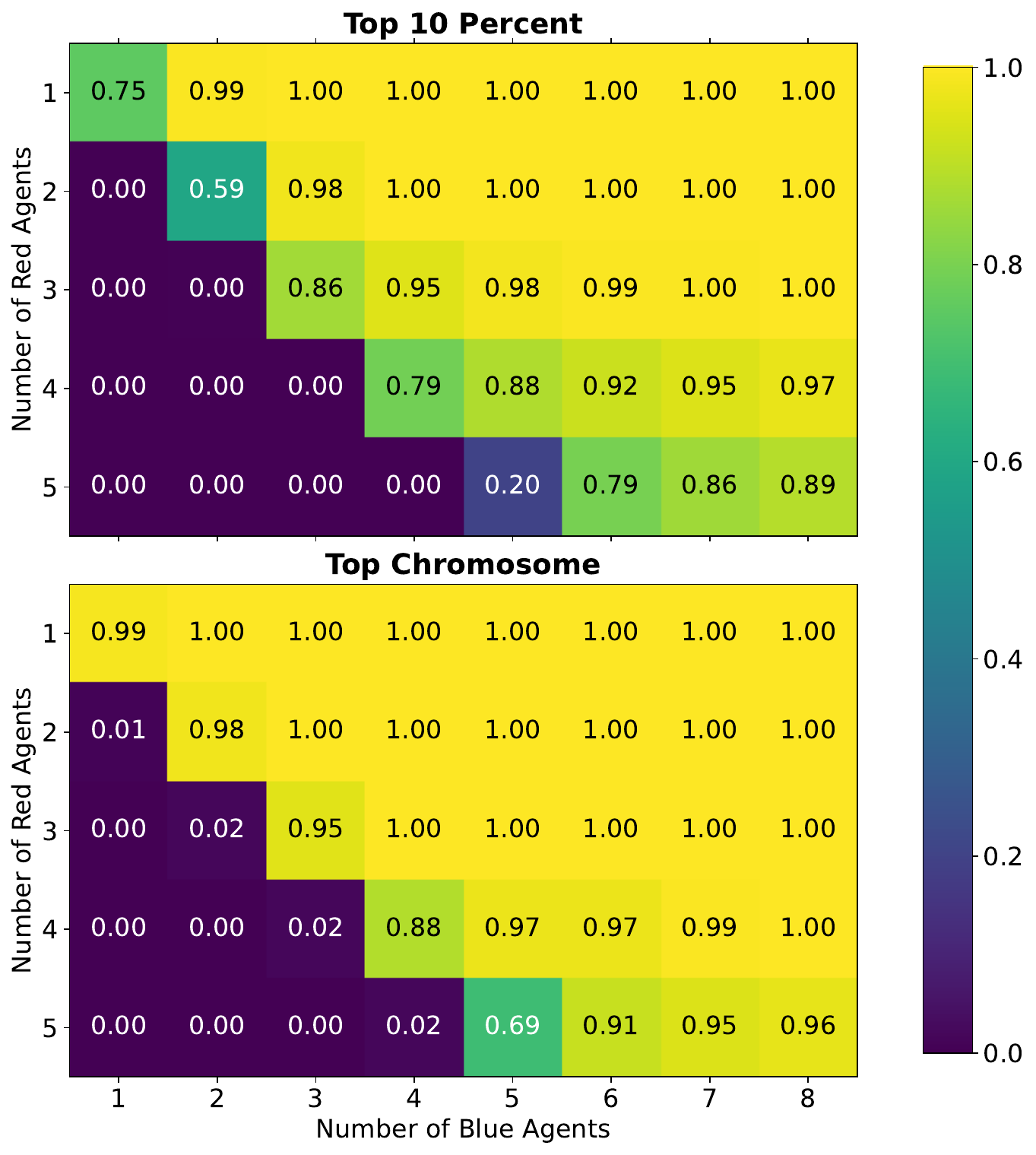}
  \caption{Heatmap following 128 GA generations.}
\end{subfigure}\hspace{3mm}
\caption{Heatmaps showing mean population win rates for the top-performing 10 percentile and the top-performing individual chromosome, both before (a) and after (b) application of GA generations. The latter, demonstrating significantly improved win rates for small scale swarm-defense scenarios, are utilised in subsequent stages of the approach.}
\label{fig:baseHeatmaps}
\end{figure*}

\subsection{Stage 2 - DP for Chromosome Allocation}

In the second stage, we address the core challenge of scaling defender policies from small scenarios to engagements with a larger number of attackers. Directly evolving effective chromosomes for scenarios with ten or more attackers is computationally prohibitive, as the search space grows combinatorially with the number of agents. Instead, we leverage the observation that outcomes of small-scale encounters, such as 2 vs. 3 or 3 vs. 4, can be represented compactly as empirical success probabilities obtained in Stage 1. These outcomes provide a lookup table of defender effectiveness against subgroups of attackers under varying team sizes.

Stage 2 uses a DP search to compose these small-scale results into effective strategies for larger engagements. The central idea is to represent a set of $R$ attackers as a partition into subgroups, for example $R=10 \rightarrow \{3,3,4\}$. Each subgroup is assumed to be engaged independently by a sub-team of defenders. For any given partition, the DP algorithm explores feasible allocations of the available defenders across the subgroups, consulting the Stage 1 heatmaps to evaluate the probability of success for each allocation.

Let $P(r,b)$ denote the probability of successfully defending against $r$ attackers with $b$ defenders, as estimated from Stage 1. For a partition $\pi=\{r_1,r_2,\ldots,r_k\}$ of $R$, and an allocation $\alpha=\{b_1,b_2,\ldots,b_k\}$ of $B$, the joint success probability is given by:

\begin{eqnarray}\label{eq:jointprob}
V(\pi,\alpha) = \prod_{i=1}^{k} P(r_i,b_i)
\end{eqnarray}

The DP search enumerates partitions of $R$, and for each partition computes the allocation $\alpha$ that maximises $V(\pi,\alpha)$. By caching intermediate results, the DP approach avoids redundant computation, enabling efficient evaluation of all possible subgroup structures and defender splits.

This stage therefore produces a set of high potential candidate defender configurations for larger-scale swarm-defense scenarios, which are then passed forward to Stage 3. In addition, this stage also enables exploration by trade-offs between success and resource expenditure by varying the number of defenders $B$, supporting systematic investigation of efficiency, for example through determining whether an 11-defender team achieves significantly higher success than a 10-defender team against the same set of attackers.

\begin{algorithm}[h]
\caption{Dynamic Program for Partitioning and Allocation}
\label{alg:dp-allocation}
\begin{algorithmic}
\Require \\$R$ (number of attackers) \\ $B$ (number of defenders) \\ $P(r,b)$ (Stage~1 success probabilities)
\Ensure Best partition $\pi^*$ and allocation $\alpha^*$
\Function{BestDefense}{$R,B$}
    \If{$R = 0$}
        \State \Return (success $=1$, allocation $=[]$)
    \EndIf
    \If{cache[$R,B$] exists}
        \State \Return cache[$R,B$]
    \EndIf
    \State $bestValue \gets 0$
    \State $bestSolution \gets \text{None}$
    \For{each partition $\pi = \{r_1, \ldots, r_k\}$ of $R$}
        \For{each allocation $\alpha = \{b_1, \ldots, b_k\}$ with $\sum b_i = B$}
            \State $value \gets \prod_{i=1}^{k} P(r_i, b_i)$
            \If{$value > bestValue$}
                \State $bestValue \gets value$
                \State $bestSolution \gets (\pi,\alpha)$
            \EndIf
        \EndFor
    \EndFor
    \State cache[$R,B$] $\gets$ (bestValue, bestSolution)
    \State \Return cache[$R,B$]
\EndFunction
\end{algorithmic}
\end{algorithm}

Consider the case $R=8$ attackers and $B=10$ defenders. The DP routine computes $f(8,10)$, the best defense probability for this scenario. One possible split is partition to $\{5,3\}$ attackers and allocate $\{7,3\}$ defenders. The value for this split is computed as:
$$f(8,10)=f(5,7) \times f(3,3)$$
Another split might be $\{4,3,1\}$ attackers with defender allocation $\{5,4,1\}$. By considering all such splits and retaining the maximum value, the algorithm builds $f(8,10)$ from small sub-problems. Each of these is computed recursively, terminating in base cases where outcomes are directly available from the Stage 1 heatmaps. Once a value is computed, it can be used in all higher-level states, thus avoiding redundant recomputation.

By way of contrast, a brute force approach would require enumerating all integer partitions of $R$, and for each partition considering all feasible allocations of defenders. The number of (unordered) partitions of an integer $R$ is given by the Bell number $B_R$ which approaches super-exponential growth. For a partition of size $k$, the number of possible allocations of defenders (including an empty allocation) is $\binom{B+k-1}{k-1}$. Combining these, it is becomes clear that overall complexity of brute force means it is intractable for all but small values of $R$. The DP approach on the other hand has polynomial complexity.

\begin{proposition}
\label{prop:dp-complexity}
Let $R$ and $B$ denote the number of red attackers and blue defenders, respectively, and let $k$ denote the maximum subgroup size of red agents for which precomputed outcomes are available. Then Algorithm~\ref{alg:dp-allocation} has worst-case time complexity
\[
O(R^3 B^3),
\]
which is polynomial in $R$ and $B$ for fixed $k$.
\end{proposition}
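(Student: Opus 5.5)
The bound does not come from Algorithm~\ref{alg:dp-allocation} read literally, since its loops range over every partition of $R$. I will analyse the recursive form described after it, computing $f(r,b)$ by peeling off one subgroup. The plan is to (i) fix that recurrence, (ii) count the distinct states, (iii) bound the work per state, and (iv) multiply the two counts.

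\textbf{The recurrence.} For $0\le r\le R$ and $0\le b\le B$, define $f(0,b)=1$ and
\[
f(r,b)=\max_{1\le r'\le \min(k,r)}\ \max_{0\le b'\le b} P(r',b')\, f(r-r',\,b-b').
\]
This is exactly the decomposition used in the worked example, such as $f(8,10)=f(5,7)\times f(3,3)$.

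\textbf{Correctness of the recurrence.} Take any partition $\pi=\{r_1,\dots,r_m\}$ with every $r_i\le k$, together with an allocation $\alpha$. Ordering its parts arbitrarily, $V(\pi,\alpha)$ in \eqref{eq:jointprob} factors as $P(r_1,b_1)\cdot V(\pi\setminus\{r_1\},\alpha\setminus\{b_1\})$. An induction on $r$ then shows that $f(r,b)$ equals the maximum of $V$ over all admissible $(\pi,\alpha)$. Two points carry this induction: $P\ge 0$, so maximising the second factor is optimal, and every part size $r'$ that appears is at most $k$.

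\textbf{Counting states.} With memoisation, each pair $(r,b)$ is computed at most once. That gives at most $(R+1)(B+1)=O(RB)$ states. Evaluating one state scans $r'\in\{1,\dots,\min(k,r)\}$ and $b'\in\{0,\dots,b\}$, with $O(1)$ cost per pair for the table lookup, the cache lookup and one multiplication.

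\textbf{Work per state and the product.} The per-state cost is $O(\min(k,R)\cdot B)\le O(RB)$. Multiplying, the total is at most $O(RB)\cdot O(RB)=O(R^2B^2)$. This already lies within the stated $O(R^3B^3)$, and I expect the stated bound to be read as a deliberately loose upper bound.

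I would also allow an extra factor to absorb overheads such as storing or reconstructing the argmax partition and allocation. Each such reconstruction has length at most $R$, which accounts for one additional factor of order $R$ or $B$ and still stays within $O(R^3B^3)$. For fixed $k$, the per-state factor tightens to $O(kB)$, giving $O(kRB^2)$ overall, which is clearly polynomial.

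\textbf{Main obstacle.} The real difficulty is not the arithmetic but justifying that the algorithm as written, with its loop ``for each partition $\pi$ of $R$'', should be read as the subgroup-peeling recursion above. Only under that reading does the enumeration collapse to polynomially many cached subproblems. Read literally, the loop enumerates all partitions and is exponential.

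I would resolve this by appealing to the text's description that ``each of these is computed recursively'' with reuse of cached $f$ values. I would then state explicitly that the inner loops range over the first block $(r',b')$ only, with the remainder delegated to the cache.
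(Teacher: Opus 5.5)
Your argument is correct, and it reaches the stated bound (in fact a tighter one) by a different accounting from the paper's.

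Both proofs multiply the $O(RB)$ memoised entries by the work per entry. The paper keeps the pseudocode's partition-enumeration reading. It charges each entry $O(R^2)$ for splitting attackers over ``intervals'' and $O(B^2)$ for distributing defenders. It then asserts that the $O(n^{k-1})$ count of partitions into parts of size at most $k$ is dominated by these factors, giving $O(R^2B^2)$ per entry and $O(R^3B^3)$ overall.

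You instead commit to an explicit first-block recurrence, so no partition and no multi-part allocation is ever enumerated. Each state scans only pairs $(r',b')$, which gives $O(\min(k,R)\,B)$ per state and $O(\min(k,R)\,RB^2)\subseteq O(R^2B^2)$ in total. Your route buys several things:
\begin{itemize}
\item every factor is derived rather than asserted;
\item the bound does not even need $k$ fixed;
\item you supply a correctness argument (nonnegativity of $P$ plus exchange of maxima), which the paper does not give;
\item you avoid the paper's domination claim, which holds only for $k\le 3$, whereas Stage~1 supplies red groups of size up to $5$.
\end{itemize}

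You are also right, and more explicit than the paper, that the pseudocode read literally is not polynomial. Once parts are restricted to size at most $k$, there are only $O(R^{k-1})$ partitions. The real blow-up is the allocation loop, with $\binom{B+m-1}{m-1}$ terms for an $m$-part partition. That is precisely the step the paper compresses into ``$O(B^2)$''. The worked example $f(5,7)\times f(3,3)$ is a binary split into two DP subproblems, not literally your first-block step. The two recurrences have the same optimum, and the binary one costs $O(RB)$ per state, so it is still within the bound.

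One small correction to your correctness claim concerns the base case. With $f(0,b)=1$ for every $b$, your $f$ maximises over allocations with $\sum b_i\le b$, not the algorithm's $\sum b_i=B$. The two agree when $P$ is nondecreasing in the number of defenders, because spare defenders can then be added to any group. Otherwise use $f(0,b)=0$ for $b>0$ inside the recursion. This does not affect the complexity count.
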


The Supplementary Material provides a proof for Proposition \ref{prop:dp-complexity}.

\subsection{Stage 3 - Chromosome Sampling with Priors}

The role of Stage 3 is to transition from small-scale empirical knowledge (Stage 1) and dynamic allocations (Stage 2) to full-sized configurations that can be tested in the target scenario. Rather than relying on evolutionary operators, Stage 3 systematically samples and combines high-performing small-scale chromosomes.

Stage 3 commences by sampling from the distribution of promising sub-chromosomes discovered in Stage 1, then assembling them into full chromosomes consistent with the allocations identified in Stage 2. For example, if the DP procedure recommends that a sub-team of three defenders be allocated to a partition of five attackers, Stage 3 samples multiple candidate sub-teams from the high-performing small-scale chromosomes identified in Stage 1. This sampling introduces diversity, recognising that a strong sub-team in isolation may not retain its effectiveness when combined with other sub-teams. Repeating this process across all partitions generates a population of candidate full-team chromosomes for evaluation in the full simulation.

These assembled chromosomes are then evaluated directly in the full simulation environment, where the stochastic dynamics and interdependencies between sub-teams are faithfully represented. This step is crucial, as the independence assumptions underpinning Stage 2’s DP analysis can only provide an approximation of real-world multi-agent interactions.

By systematically joining and testing factorised chromosome components, Stage 3 achieves two goals: (i) it leverages prior knowledge to drastically reduce the search space compared to naïve sampling, and (ii) it ensures that the retained strategies are robust to the complexities of large-scale multi-agent coordination.

\subsection{Stage 4 - Iterative Refinement}

During Stage 4, win rate performance of the full-team chromosomes are attributed back to the sub-chromosomes that were utilised. For each sub-chromosome, we accumulate a record of its performance across all simulations in which it appeared (it may be utilised in several full-team chromosomes), enabling the calculation of a mean win rate that reflects its effectiveness in different contexts. This process establishes an empirical performance profile for each sub-chromosome, which serves as the basis for the weighted refinement applied in Stage 4. By grounding sub-chromosome estimates in repeated observations, Stage 3 provides the link between small-scale evaluations and the more comprehensive large-scale simulations.

For each sub-chromosome utilised in the large-scale evaluations, its win rate estimate is refined by combining priors from Stage 1 with empirical evidence accumulated in Stage 3. Specifically, the refined estimate, $w^{(1)}_\textit{new}$, is obtained through a weighted combination:

$$w^{(1)}_\textit{new}=\frac{K}{n+K} w^{(1)}_\textit{old} + \frac{n}{n+K} w^{(3)}$$

where $w^{(1)}_\textit{old}$ is the current Stage 1 estimate, $w^{(3)}$ is the average performance observed across Stage 3 simulations, and $n$ is the number of times the sub-chromosome has featured in those simulations. Here $K$ is a hyper-parameter that controls the relative influence of the prior versus empirical data.

The value of $K$ was tuned via sensitivity analysis and trial-and-error, balancing the need for meaningful updates against the risk of overreacting to noisy outcomes. A value of $K = 200$ was ultimately selected, providing smooth yet efficient refinement of the Stage 1 estimates.

This weighted refinement ensures that sub-chromosomes with consistent large-scale success are promoted, while those that under-perform are down-weighted, even if they appeared promising in small-scale settings. By iteratively applying this procedure across all evaluated chromosomes, Stage 4 produces more reliable win rate estimates for each building block. As evaluations accumulate, the refined win rates converge toward values that accurately reflect performance under large-scale conditions. Stage 4 thus validates composite strategies at scale while iteratively sharpening the estimated effectiveness of the underlying building blocks.

\section{Results and discussion}

\subsection{Baseline}

Performance in all cases is measured as the defender win rate, defined as the proportion of simulations in which defenders successfully prevent attackers from reaching the target. Since prior work in multi-agent attack–defense scenarios has focused almost exclusively on small-scale settings (typically involving fewer than ten agents) or algebraic formulations that do not scale, no directly comparable baselines exist in the literature.  Accordingly, we construct two internal baselines to anchor our large-scale evaluation.

\textit{Random population}. In this baseline, defender chromosomes are generated by assigning heuristics and parameters  at random. This provides a lower-bound performance level and establishes how well defenders can perform without any optimisation.

\textit{Genetic algorithm}. In this baseline, we apply the same GA used to optimise the small-scale scenarios in Stage 1 directly to large-scale scenarios, starting from a random initial population. This represents the natural alternative to our staged approach, in which a conventional GA is tasked with discovering effective defender strategies without factorisation or staged refinement.

Together, these baselines capture the extremes of unstructured search: a purely random allocation and a flat GA applied directly to the full problem. By comparing against these, we can demonstrate the benefits of staged factorisation, DP, and large-scale refinement in producing scalable and effective defender strategies.

As the GA baseline only exhibits minor performance enhancement over the random baseline, we show only the former in Figure \ref{fig:largeResultsA}. Overall win rates remain close to zero across most scenarios, and even the best-performing chromosomes achieve only modest outcomes (e.g., a maximum win rate of 0.328 for 10 Red at a 1.5 Blue/Red ratio). Performance collapses almost entirely for larger attacker swarms, with even elite win rates rarely exceeding 0.1.

These results clearly establish the difficulty of the large-scale defense problem, setting the stage for Section 4.2, where we demonstrate our hybrid GA–DP approach substantially outperforms baselines, achieving high win rates even against dense attacker swarms.

\subsection{Comparative Results}

We perform an ablation study to quantify the contribution of each stage in our hybrid GA–DP pipeline. Figure \ref{fig:largeResults} present heatmaps of defender win rates for three configurations: (a) GA baseline, (b) Stage 3: DP-informed chromosome sampling, and (c) Stage 4: Iterative Refinement. This allows a clear evaluation of the incremental benefit of each component.

\begin{figure*}[htb!]
\centering
\begin{subfigure}{.27\textwidth}
  \includegraphics[width=\linewidth]{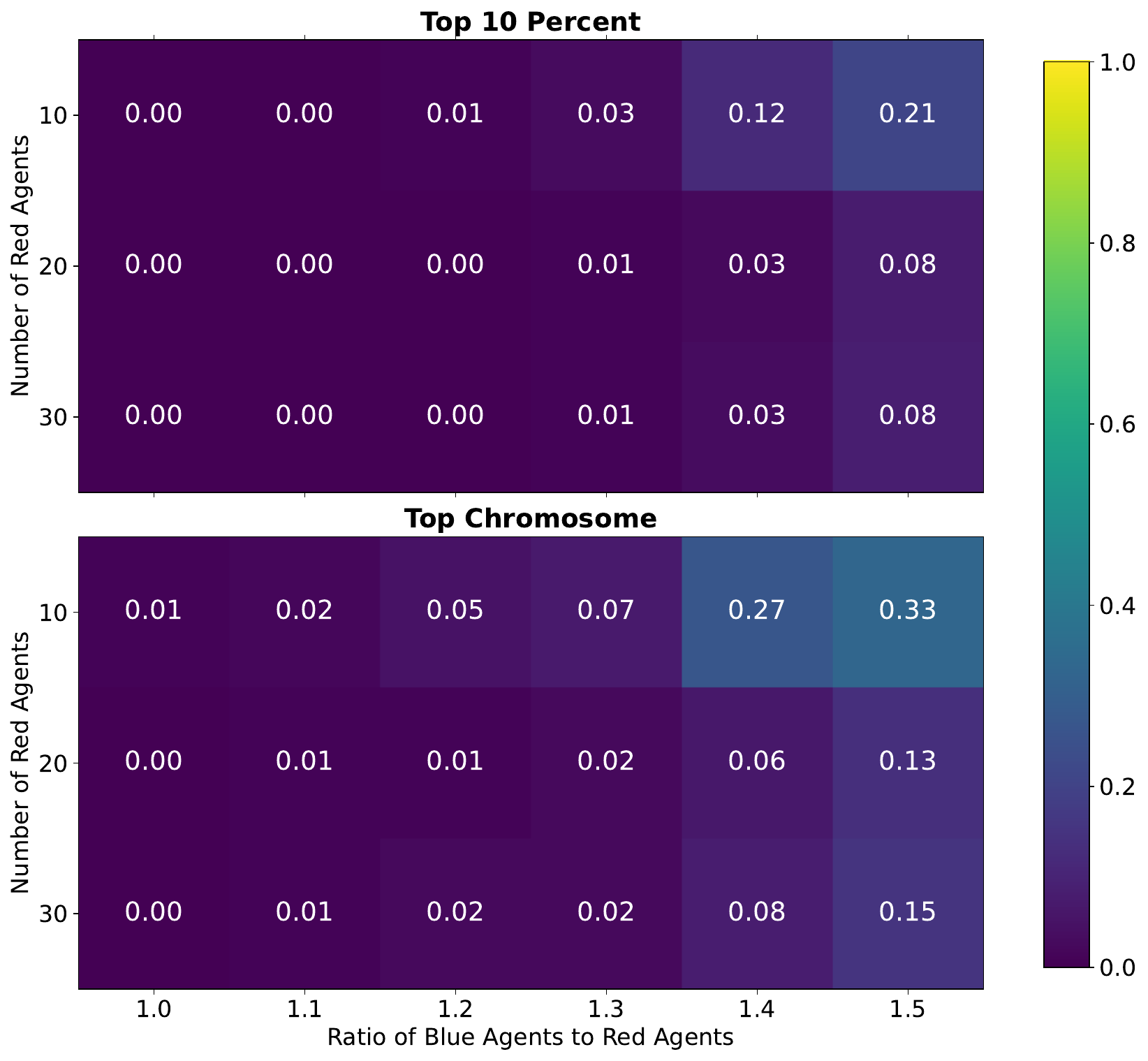}
  \caption{GA Baseline.}
  \label{fig:largeResultsA}
\end{subfigure}\hspace{3mm}
\begin{subfigure}{.27\textwidth}
  \includegraphics[width=\linewidth]{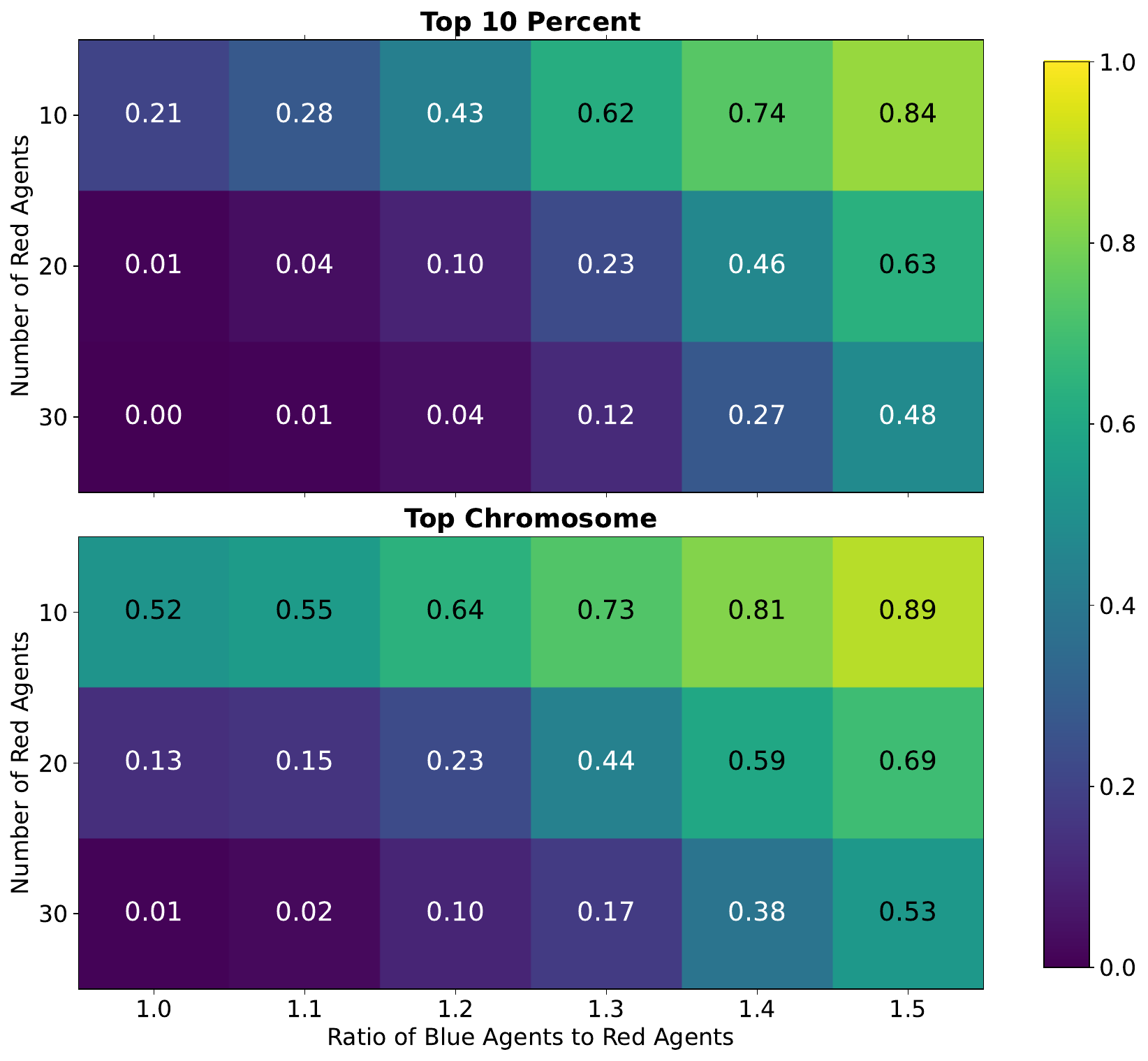}
  \caption{Stage 3: DP-informed Sampling.}
  \label{fig:largeResultsB}
\end{subfigure}\hspace{3mm}
\begin{subfigure}{.27\textwidth}
  \includegraphics[width=\linewidth]{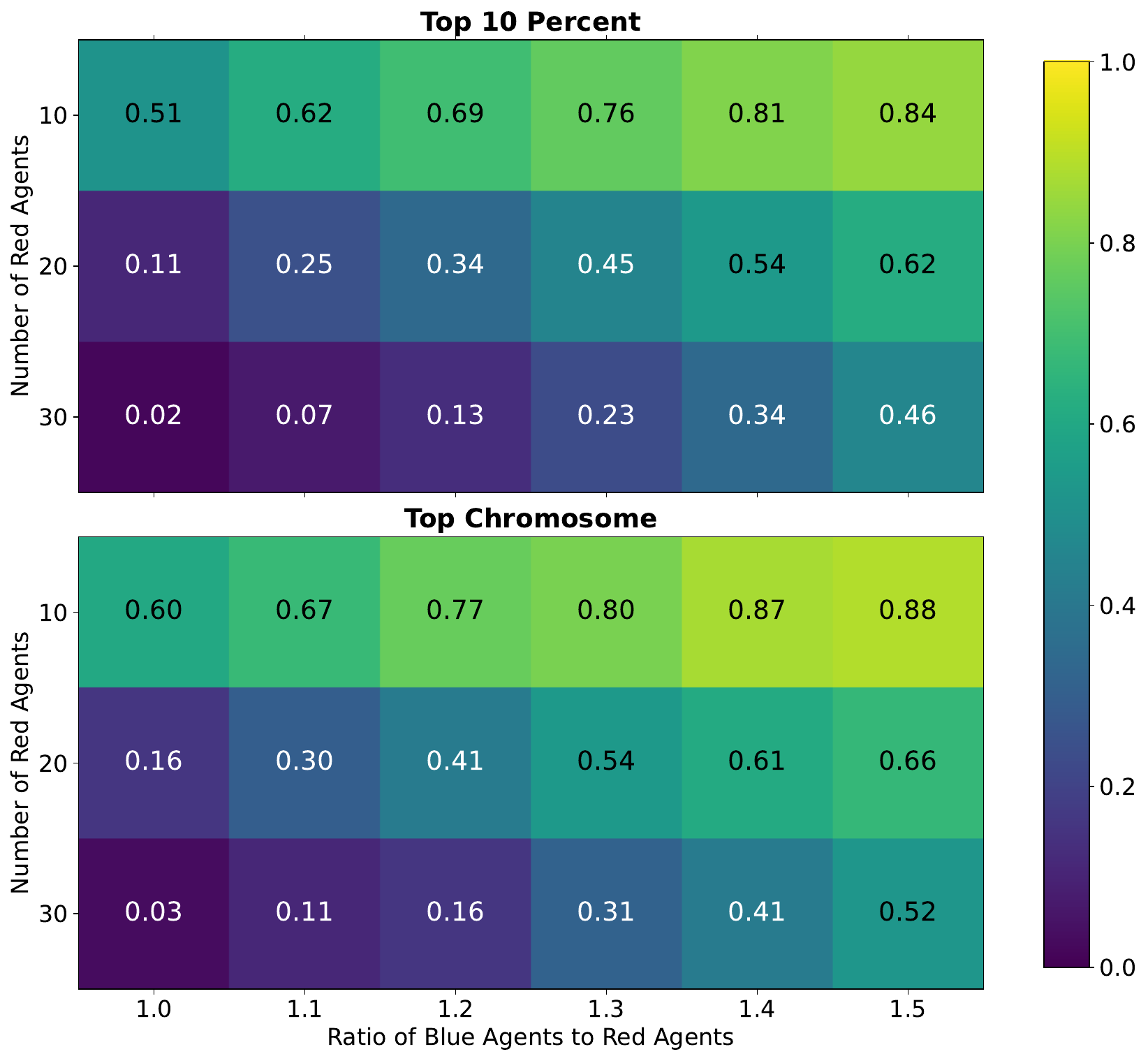}
  \caption{Stage 4: Iterative Refinement.}
  \label{fig:largeResultsC}
\end{subfigure}\hspace{3mm}
\caption{Ablation study of defender performance across Red swarm sizes and Blue-to-Red ratios. Columns show Blue/Red ratios and rows show the number of Red attackers. Shading indicates win rate, with lighter colors representing higher performance. For each subfigure, the top row shows the mean win rate of the top 10\% of chromosomes and the bottom row the best-performing chromosome. Stage 3 demonstrates the benefit of combining DP allocation with sampling, while Stage 4 illustrates the additional gains from iterative refinement, achieving the highest and most consistent win rates across all scenarios.}
\label{fig:largeResults}
\end{figure*}

\textit{Stage 3 Ablation (DP-informed Chromosome Sampling)}.
Stage 3 (Figure \ref{fig:largeResultsB}) combines the dynamic programming allocation from Stage 2 with sampling of promising small-scale sub-chromosomes to assemble candidate full-team strategies. This stage alone substantially improves performance relative to the GA baseline (Figure \ref{fig:largeResultsA}), producing nontrivial defender strategies across swarm sizes. For example, at $R=20$ and a 1.5 Blue/Red ratio, the best chromosome achieves a win rate of 0.69 compared to just 0.13 in the baseline. However, Stage 3 alone struggles to optimise: while performance improves, win rates for some smaller Blue/Red agent ratios remain modest (e.g., 0.17 at $R=30$,$B/R=1.3$). These results highlight that DP-informed sampling establishes a strong foundation but is insufficient for fully robust large-scale strategies.

\textit{Stage 4 Ablation (Iterative Refinement)}.
Incorporating Stage 4 refinement (Figure \ref{fig:largeResultsC}) significantly amplifies performance across all scenarios. Iterative re-weighting and evaluation of sub-chromosomes consistently produces chromosomes that outperform Stage 3, particularly at lower Blue/Red ratios where coordination is critical. For instance, at $R=20$ and $B/R=1.3$, the top 10 percent improves from 0.22 (Stage 3) to 0.45 after Stage 4 refinement. For smaller swarms, defenders approach near-complete reliability, with win rates exceeding 0.80 at $R=10$ and $B/R=1.3$. Even the most challenging scenarios with $R=30$ achieve a top win rate of 0.52, far above the negligible performance of the baseline.

Overall, this ablation study demonstrates that Stage 3 establishes a foundation of promising strategies, but Stage 4 is essential for convergence toward robust, high-performing defender allocations. This refinement allows the framework to scale efficiently, preserving effectiveness even as the size and complexity of the attacking swarm increase.

\subsection{Discussion}

Scalable swarm defense poses an inherently hard optimisation problem due to the explosive combinatorial growth of possible defender allocations. The staged approach presented here substantially outperforms the baseline (random initialisation + GA) and achieves robust performance across increasingly challenging scenarios.

\textit{Breaking complexity into solvable pieces}.
The central advantage of the staged pipeline lies in factorisation of the search space. By first evolving small chromosomes in Stage~1, we obtain candidate building blocks under low-dimensional constraints. Stage~3 then leverages these factors through dynamic programming, assembling them into large-scale defender allocations. Finally, Stage~4 corrects biases introduced in earlier stages by iteratively refining chromosomes under direct evaluation in large scenarios. This decomposition dramatically reduces the effective search complexity compared to tackling large-scale swarm defense directly.

\textit{Refinement turns small-scale wins into large-scale victories}.
The comparison of Stages~3 and~4 highlights that some heuristics are strongly context-dependent. Certain heuristics perform well in small battles but fail to generalise when scaled, while others remain robust across sizes. Pruning experiments show that removing low-frequency heuristics has little effect, whereas removing the most frequent, high-impact heuristic substantially reduces performance, indicating that robustness depends on preserving key strategies. Iterative refinement in Stage~4 exposes these differences and converges on strategies that persist under large swarm conditions. This underscores the importance of multi-scale evaluation: assessing only at small scales risks over-committing to heuristics that collapse when scaled up.

\textit{Synergy, not parts: why chromosomes matter}.
An additional experiment explored gene-level sampling in Stage~3, where individual tuples were recombined independently rather than preserving entire chromosomes. Results were markedly poor. High-performing strategies appear to depend on synergistic combinations of tuples rather than on the marginal quality of individual components. Tuples that appear promising in isolation may be harmful if overused, while seemingly weak tuples can be critical in the right context. This resonates with the idea of deceptive landscapes \cite{Whitley1991, GoldbergDeceptive} and with findings in modular evolution \cite{RN461}, where interactions between submodules complicate search. In our setting, the effective unit of selection is the chromosome: the landscape is factorisable at the chromosome level but becomes brittle at the gene level.

\begin{figure}[htbp!]
\centering
\begin{subfigure}{.15\textwidth}
  \includegraphics[width=0.8\linewidth]{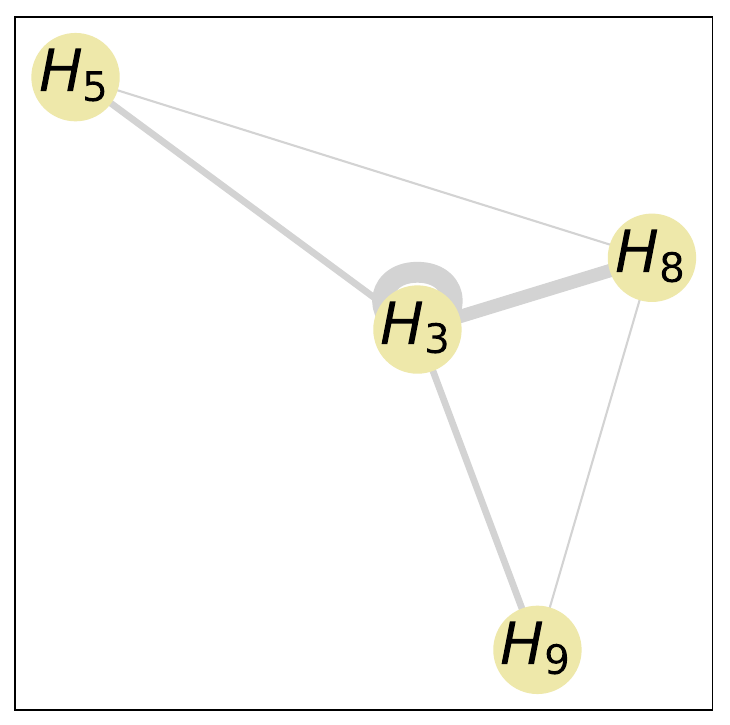}
  \caption{4R v. 7B \\building block.}
  \label{fig:networkxA}
\end{subfigure}\hspace{1mm}
\begin{subfigure}{.15\textwidth}
  \includegraphics[width=0.8\linewidth]{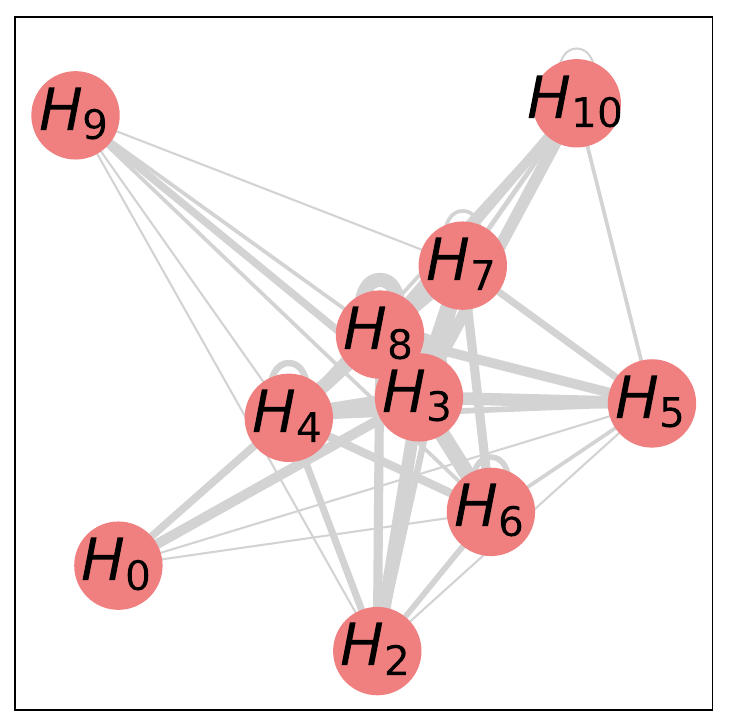}
  \caption{10R v. 15B \\Stage~1 only.}
  \label{fig:networkxB}
\end{subfigure}\hspace{1mm}
\begin{subfigure}{.15\textwidth}
  \includegraphics[width=0.8\linewidth]{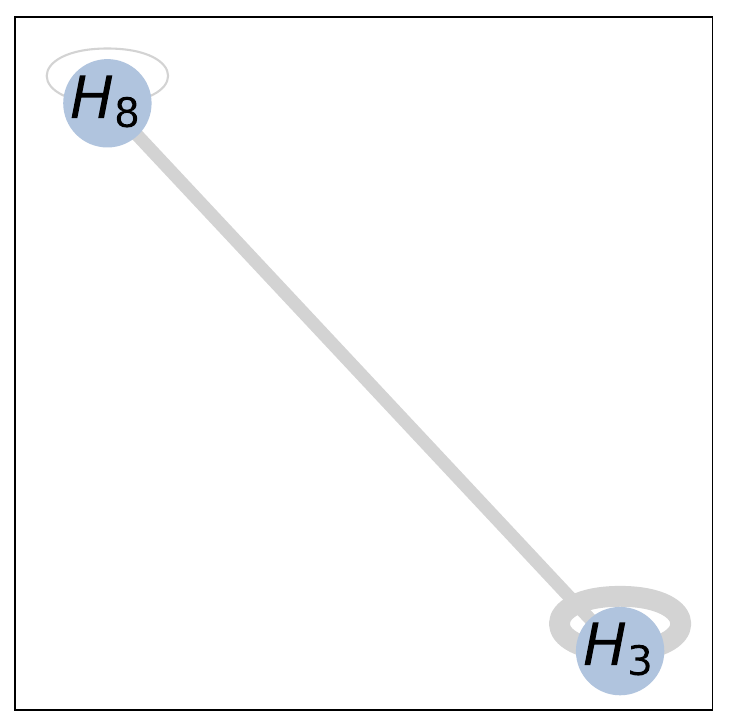}
  \caption{10R v. 15B \\iterative Stages~1--4.}
  \label{fig:networkxC}
\end{subfigure}\hspace{1mm}
\caption{Network graphs of heuristic co-occurrence (edge thickness proportional to co-occurrence frequency). GA alone can distil effective structures in small-scale cases (a), but fails to simplify when scaled up (b). Only the full pipeline (c) restores parsimony, converging on a compact set of synergistic heuristics.}
\label{fig:networkx}
\end{figure}

Figure~\ref{fig:networkx} illustrates this effect. Each network graph shows gene co-occurrence within the top 10\% of high-performing chromosomes in the final population (simplified to heuristics only for clarity), with edge weights proportional to co-occurrence frequency. Panel~(a) shows that GA alone can distil effective combinations in a small setting (4R v.~7B). In contrast, panel~(b) shows that the same process fails in the larger 10R v.~15B case, where the search space overwhelms direct GA optimisation. The full staged pipeline in panel~(c) restores structure, producing simplified chromosomes dominated by heuristics \#3 Predictive Interception and \#8 Threat-Level Assessment, sometimes supported by \#2 Circle Target or \#4 Rotation Patrol in low-ratio settings. These patterns show how staged evolution isolates a compact set of repeatable heuristics, balancing proactive interception with minimal passive guarding, and sustaining effectiveness even in complex scenarios.

In summary, the staged pipeline succeeds because it balances tractable search with large-scale fidelity. Early stages discover and preserve useful strategic structures, while later stages refine them under the full complexity of large engagements. Reducing the problem further to the gene level destroy this synergy, confirming that in this MAS defense setting the meaningful building blocks are entire chromosomes, not individual genes. These findings highlight both the limitations of naive factorisation and the promise of staged, chromosome-level approaches for large-scale multi-agent defense.

\section{Conclusions}

This study introduces a staged factorization approach for optimizing defender allocation in large-scale multi-agent drone defense scenarios. The key innovation lies in decomposing the problem into manageable stages: small-scale genetic evaluation of heuristics, factor refinement, and large-scale evaluation of combined strategies. This staged approach reduces search complexity, mitigates early evaluation biases, and enables the identification of high-performing defender allocations that generalize across swarm sizes — outcomes that conventional methods struggle to achieve.\balance

Our results highlight that some heuristics are highly effective only in small battles, whereas others provide robust defense in large-scale engagements. The method’s ability to combine these heuristics into effective large-scale strategies demonstrates the practical value of staged factorization in multi-agent systems. In real-world drone defense contexts, this implies that a principled, factorized evaluation can guide the deployment of autonomous agents to protect high-value assets against diverse adversary strategies.

An additional insight is the explicit mapping between defensive effectiveness and available resources. By evaluating defender allocations across a range of blue-to-red agent ratios, the approach reveals diminishing returns for certain heuristics and identifies allocation strategies that achieve high win rates with minimal resources. This provides actionable guidance for resource-constrained operations, allowing decision-makers to balance effectiveness against the number of deployed agents and optimize overall system efficiency.

A limitation of our approach is the assumption of full observability and deterministic agent dynamics. Real-world environments may feature partial information, communication constraints, and stochastic disturbances, which could affect generalizability of our findings. Additionally, the current heuristics and allocation schemes do not incorporate learning-based adaptation during deployment.

Future work could address these limitations by integrating partially observable decision-making, adaptive heuristics, and reinforcement learning at lower levels. Expanding to dynamic adversary behaviors, environmental obstacles, and more heterogeneous agent capabilities could further enhance the applicability of the approach to operational drone defense scenarios.


\section*{Acknowledgements}
 G. Douglas has received support through the Australian Government Research Training Program Scholarship. This work has also been supported in part by Saab Australia Pty Ltd through the Defence Trailblazer program. This work was supported with super-computing resources provided by the Phoenix HPC service at Adelaide University.


\clearpage
\bibliographystyle{ACM-Reference-Format} 
\bibliography{references}


\begin{thebibliography}{30}


\ifx \showCODEN    \undefined \def \showCODEN     #1{\unskip}     \fi
\ifx \showDOI      \undefined \def \showDOI       #1{#1}\fi
\ifx \showISBNx    \undefined \def \showISBNx     #1{\unskip}     \fi
\ifx \showISBNxiii \undefined \def \showISBNxiii  #1{\unskip}     \fi
\ifx \showISSN     \undefined \def \showISSN      #1{\unskip}     \fi
\ifx \showLCCN     \undefined \def \showLCCN      #1{\unskip}     \fi
\ifx \shownote     \undefined \def \shownote      #1{#1}          \fi
\ifx \showarticletitle \undefined \def \showarticletitle #1{#1}   \fi
\ifx \showURL      \undefined \def \showURL       {\relax}        \fi
\providecommand\bibfield[2]{#2}
\providecommand\bibinfo[2]{#2}
\providecommand\natexlab[1]{#1}
\providecommand\showeprint[2][]{arXiv:#2}

\bibitem[\protect\citeauthoryear{Achiam, Edwards, Amodei, and Abbeel}{Achiam et~al\mbox{.}}{2018}]%
        {RN275}
\bibfield{author}{\bibinfo{person}{Joshua Achiam}, \bibinfo{person}{Harrison Edwards}, \bibinfo{person}{Dario Amodei}, {and} \bibinfo{person}{Pieter Abbeel}.} \bibinfo{year}{2018}\natexlab{}.
\newblock \showarticletitle{Variational option discovery algorithms}.
\newblock \bibinfo{journal}{\emph{arXiv preprint arXiv:1807.10299}}  \bibinfo{volume}{abs/1807.10299} (\bibinfo{year}{2018}).
\newblock


\bibitem[\protect\citeauthoryear{Ahner and Parson}{Ahner and Parson}{2013}]%
        {Ahner2013}
\bibfield{author}{\bibinfo{person}{Darryl Ahner} {and} \bibinfo{person}{Carl Parson}.} \bibinfo{year}{2013}\natexlab{}.
\newblock \showarticletitle{Weapon tradeoff analysis using dynamic programming for a dynamic weapon target assignment problem within a simulation}. In \bibinfo{booktitle}{\emph{2013 Winter Simulations Conference (WSC)}}. \bibinfo{publisher}{IEEE Press}, \bibinfo{address}{Washington, D.C.}, \bibinfo{pages}{2831--2841}.
\newblock
\urldef\tempurl%
\url{https://doi.org/10.1109/WSC.2013.6721653}
\showDOI{\tempurl}


\bibitem[\protect\citeauthoryear{Bradbury, Frostig, Hawkins, Johnson, Leary, Maclaurin, Necula, Paszke, Vander{P}las, Wanderman-{M}ilne, and Zhang}{Bradbury et~al\mbox{.}}{2018}]%
        {jax}
\bibfield{author}{\bibinfo{person}{James Bradbury}, \bibinfo{person}{Roy Frostig}, \bibinfo{person}{Peter Hawkins}, \bibinfo{person}{Matthew~James Johnson}, \bibinfo{person}{Chris Leary}, \bibinfo{person}{Dougal Maclaurin}, \bibinfo{person}{George Necula}, \bibinfo{person}{Adam Paszke}, \bibinfo{person}{Jake Vander{P}las}, \bibinfo{person}{Skye Wanderman-{M}ilne}, {and} \bibinfo{person}{Qiao Zhang}.} \bibinfo{year}{2018}\natexlab{}.
\newblock \bibinfo{booktitle}{\emph{{JAX}: composable transformations of {P}ython+{N}um{P}y programs}}.
\newblock Google Research.
\newblock
\urldef\tempurl%
\url{http://github.com/jax-ml/jax}
\showURL{%
\tempurl}


\bibitem[\protect\citeauthoryear{DeepMind}{DeepMind}{2025}]%
        {gemini}
\bibfield{author}{\bibinfo{person}{Google DeepMind}.} \bibinfo{year}{2025}\natexlab{}.
\newblock \bibinfo{title}{Gemini}.
\newblock
\newblock
\urldef\tempurl%
\url{https://www.deepmind.com/gemini}
\showURL{%
\tempurl}


\bibitem[\protect\citeauthoryear{Eysenbach, Gupta, Ibarz, and Levine}{Eysenbach et~al\mbox{.}}{2018}]%
        {RN274}
\bibfield{author}{\bibinfo{person}{Benjamin Eysenbach}, \bibinfo{person}{Abhishek Gupta}, \bibinfo{person}{Julian Ibarz}, {and} \bibinfo{person}{Sergey Levine}.} \bibinfo{year}{2018}\natexlab{}.
\newblock \showarticletitle{Diversity is all you need: Learning skills without a reward function}.
\newblock \bibinfo{journal}{\emph{arXiv preprint arXiv:1802.06070}}  \bibinfo{volume}{abs/1802.06070} (\bibinfo{year}{2018}).
\newblock


\bibitem[\protect\citeauthoryear{Ghavamzadeh, Mahadevan, and Makar}{Ghavamzadeh et~al\mbox{.}}{2006}]%
        {RN247}
\bibfield{author}{\bibinfo{person}{Mohammad Ghavamzadeh}, \bibinfo{person}{Sridhar Mahadevan}, {and} \bibinfo{person}{Rajbala Makar}.} \bibinfo{year}{2006}\natexlab{}.
\newblock \showarticletitle{Hierarchical multi-agent reinforcement learning}.
\newblock \bibinfo{journal}{\emph{Autonomous Agents and Multi-Agent Systems}} \bibinfo{volume}{13}, \bibinfo{number}{2} (\bibinfo{year}{2006}), \bibinfo{pages}{197--229}.
\newblock
\showISSN{1387-2532}
\urldef\tempurl%
\url{https://doi.org/10.1007/s10458-006-7035-4}
\showDOI{\tempurl}


\bibitem[\protect\citeauthoryear{Goldberg, Deb, and Horn}{Goldberg et~al\mbox{.}}{1994}]%
        {GoldbergDeceptive}
\bibfield{author}{\bibinfo{person}{David Goldberg}, \bibinfo{person}{Kalyanmoy Deb}, {and} \bibinfo{person}{Jeffrey Horn}.} \bibinfo{year}{1994}\natexlab{}.
\newblock \showarticletitle{Massive Multimodality, Deception, and Genetic Algorithms}.
\newblock \bibinfo{journal}{\emph{Parallel Problem Solving from Nature}}  \bibinfo{volume}{2} (\bibinfo{date}{08} \bibinfo{year}{1994}).
\newblock


\bibitem[\protect\citeauthoryear{Goldberg, Korb, and Deb}{Goldberg et~al\mbox{.}}{1989}]%
        {goldberg1989messy}
\bibfield{author}{\bibinfo{person}{David~E Goldberg}, \bibinfo{person}{Bradley Korb}, {and} \bibinfo{person}{Kalyanmoy Deb}.} \bibinfo{year}{1989}\natexlab{}.
\newblock \showarticletitle{Messy genetic algorithms: Motivation, analysis, and first results}.
\newblock \bibinfo{journal}{\emph{Complex systems}} \bibinfo{volume}{3}, \bibinfo{number}{5} (\bibinfo{year}{1989}), \bibinfo{pages}{493--530}.
\newblock


\bibitem[\protect\citeauthoryear{Huaiping, Jingxu, Yingwu, and Hao}{Huaiping et~al\mbox{.}}{2006}]%
        {Cai2006}
\bibfield{author}{\bibinfo{person}{Cai Huaiping}, \bibinfo{person}{Liu Jingxu}, \bibinfo{person}{Chen Yingwu}, {and} \bibinfo{person}{Wang Hao}.} \bibinfo{year}{2006}\natexlab{}.
\newblock \showarticletitle{Survey of the research on dynamic weapon-target assignment problem}.
\newblock \bibinfo{journal}{\emph{Journal of Systems Engineering and Electronics}} \bibinfo{volume}{17}, \bibinfo{number}{3} (\bibinfo{year}{2006}), \bibinfo{pages}{559--565}.
\newblock
\urldef\tempurl%
\url{https://doi.org/10.1016/S1004-4132(06)60097-2}
\showDOI{\tempurl}


\bibitem[\protect\citeauthoryear{Isaacs}{Isaacs}{1951}]%
        {RN177}
\bibfield{author}{\bibinfo{person}{Rufus Isaacs}.} \bibinfo{year}{1951}\natexlab{}.
\newblock \bibinfo{booktitle}{\emph{Games of Pursuit}}.
\newblock \bibinfo{publisher}{RAND Corporation}, \bibinfo{address}{Santa Monica, CA}.
\newblock


\bibitem[\protect\citeauthoryear{Isaacs}{Isaacs}{1965}]%
        {RN206}
\bibfield{author}{\bibinfo{person}{Rufus Isaacs}.} \bibinfo{year}{1965}\natexlab{}.
\newblock \bibinfo{booktitle}{\emph{Differential games : a mathematical theory with applications to warfare and pursuit, control and optimization}}.
\newblock \bibinfo{publisher}{Wiley}, \bibinfo{address}{New York}.
\newblock
\newblock
\shownote{Bibliography: p. 379-380.}


\bibitem[\protect\citeauthoryear{Kline, Ahner, and Hill}{Kline et~al\mbox{.}}{2019}]%
        {KLINE2019226}
\bibfield{author}{\bibinfo{person}{Alexander Kline}, \bibinfo{person}{Darryl Ahner}, {and} \bibinfo{person}{Raymond Hill}.} \bibinfo{year}{2019}\natexlab{}.
\newblock \showarticletitle{The Weapon-Target Assignment Problem}.
\newblock \bibinfo{journal}{\emph{Computers \& Operations Research}}  \bibinfo{volume}{105} (\bibinfo{year}{2019}), \bibinfo{pages}{226--236}.
\newblock
\showISSN{0305-0548}
\urldef\tempurl%
\url{https://doi.org/10.1016/j.cor.2018.10.015}
\showDOI{\tempurl}


\bibitem[\protect\citeauthoryear{Kulkarni, Narasimhan, Saeedi, and Tenenbaum}{Kulkarni et~al\mbox{.}}{2016}]%
        {RN273}
\bibfield{author}{\bibinfo{person}{Tejas~D. Kulkarni}, \bibinfo{person}{Karthik~R. Narasimhan}, \bibinfo{person}{Ardavan Saeedi}, {and} \bibinfo{person}{Joshua~B. Tenenbaum}.} \bibinfo{year}{2016}\natexlab{}.
\newblock \showarticletitle{Hierarchical deep reinforcement learning: integrating temporal abstraction and intrinsic motivation}. In \bibinfo{booktitle}{\emph{Proceedings of the 30th International Conference on Neural Information Processing Systems}} (Barcelona, Spain) \emph{(\bibinfo{series}{NIPS'16})}. \bibinfo{publisher}{Curran Associates Inc.}, \bibinfo{address}{Red Hook, NY, USA}, \bibinfo{pages}{3682–3690}.
\newblock
\showISBNx{9781510838819}


\bibitem[\protect\citeauthoryear{Louette, Leroy, Geurts, and Ernst}{Louette et~al\mbox{.}}{2025}]%
        {RN386}
\bibfield{author}{\bibinfo{person}{Arthur Louette}, \bibinfo{person}{Pascal Leroy}, \bibinfo{person}{Yanis Geurts}, {and} \bibinfo{person}{Damien Ernst}.} \bibinfo{year}{2025}\natexlab{}.
\newblock \bibinfo{title}{Existing Gaps In Reinforcement Learning For Drone Warfare}.  (\bibinfo{year}{2025}).
\newblock
\newblock
\shownote{Author preprint.}


\bibitem[\protect\citeauthoryear{Mckay, Beckman, and Conover}{Mckay et~al\mbox{.}}{1979}]%
        {LHC}
\bibfield{author}{\bibinfo{person}{Michael Mckay}, \bibinfo{person}{Richard Beckman}, {and} \bibinfo{person}{William Conover}.} \bibinfo{year}{1979}\natexlab{}.
\newblock \showarticletitle{A Comparison of Three Methods for Selecting Vales of Input Variables in the Analysis of Output From a Computer Code}.
\newblock \bibinfo{journal}{\emph{Technometrics}}  \bibinfo{volume}{21} (\bibinfo{date}{05} \bibinfo{year}{1979}), \bibinfo{pages}{239--245}.
\newblock
\urldef\tempurl%
\url{https://doi.org/10.1080/00401706.1979.10489755}
\showDOI{\tempurl}


\bibitem[\protect\citeauthoryear{Nathanson}{Nathanson}{2000}]%
        {Nathanson2000}
\bibfield{author}{\bibinfo{person}{Melvyn~B. Nathanson}.} \bibinfo{year}{2000}\natexlab{}.
\newblock \bibinfo{booktitle}{\emph{Elementary Methods in Number Theory} (\bibinfo{edition}{1st ed. 2000.} ed.)}.
\newblock \bibinfo{publisher}{Springer New York}, \bibinfo{address}{New York, NY}.
\newblock
\showISBNx{1-280-01072-X}


\bibitem[\protect\citeauthoryear{Ning and Xie}{Ning and Xie}{2024}]%
        {RN80}
\bibfield{author}{\bibinfo{person}{Zepeng Ning} {and} \bibinfo{person}{Lihua Xie}.} \bibinfo{year}{2024}\natexlab{}.
\newblock \showarticletitle{A survey on multi-agent reinforcement learning and its application}.
\newblock \bibinfo{journal}{\emph{Journal of Automation and Intelligence}} \bibinfo{volume}{3}, \bibinfo{number}{2} (\bibinfo{year}{2024}), \bibinfo{pages}{73--91}.
\newblock
\showISSN{2949-8554}
\urldef\tempurl%
\url{https://doi.org/10.1016/j.jai.2024.02.003}
\showDOI{\tempurl}


\bibitem[\protect\citeauthoryear{OpenAI}{OpenAI}{2023}]%
        {openai-chatgpt}
\bibfield{author}{\bibinfo{person}{OpenAI}.} \bibinfo{year}{2023}\natexlab{}.
\newblock \bibinfo{title}{ChatGPT}.
\newblock
\newblock
\urldef\tempurl%
\url{https://openai.com/chatgpt}
\showURL{%
\tempurl}


\bibitem[\protect\citeauthoryear{Panait and Luke}{Panait and Luke}{2005}]%
        {RN458}
\bibfield{author}{\bibinfo{person}{Liviu Panait} {and} \bibinfo{person}{Sean Luke}.} \bibinfo{year}{2005}\natexlab{}.
\newblock \showarticletitle{Cooperative Multi-Agent Learning: The State of the Art}.
\newblock \bibinfo{journal}{\emph{Autonomous Agents and Multi-Agent Systems}} \bibinfo{volume}{11}, \bibinfo{number}{3} (\bibinfo{year}{2005}), \bibinfo{pages}{387–434}.
\newblock
\showISSN{1387-2532}
\urldef\tempurl%
\url{https://doi.org/10.1007/s10458-005-2631-2}
\showDOI{\tempurl}


\bibitem[\protect\citeauthoryear{Pateria, Subagdja, Tan, and Quek}{Pateria et~al\mbox{.}}{2022}]%
        {RN230}
\bibfield{author}{\bibinfo{person}{Shubham Pateria}, \bibinfo{person}{Budhitama Subagdja}, \bibinfo{person}{Ah-hwee Tan}, {and} \bibinfo{person}{Chai Quek}.} \bibinfo{year}{2022}\natexlab{}.
\newblock \showarticletitle{Hierarchical Reinforcement Learning}.
\newblock \bibinfo{journal}{\emph{ACM computing surveys.}} \bibinfo{volume}{54}, \bibinfo{number}{5} (\bibinfo{year}{2022}), \bibinfo{pages}{1--35}.
\newblock
\showISSN{0360-0300}
\urldef\tempurl%
\url{https://doi.org/10.1145/3453160}
\showDOI{\tempurl}


\bibitem[\protect\citeauthoryear{Selmonaj, Szehr, Del~Rio, Antonucci, Schneider, and Rüegsegger}{Selmonaj et~al\mbox{.}}{2023}]%
        {RN89}
\bibfield{author}{\bibinfo{person}{Adrian Selmonaj}, \bibinfo{person}{Szehr Szehr}, \bibinfo{person}{Giacomo Del~Rio}, \bibinfo{person}{Alessandro Antonucci}, \bibinfo{person}{Adrian Schneider}, {and} \bibinfo{person}{Michael Rüegsegger}.} \bibinfo{year}{2023}\natexlab{}.
\newblock \showarticletitle{Hierarchical Multi-Agent Reinforcement Learning for Air Combat Maneuvering}. In \bibinfo{booktitle}{\emph{2023 International Conference on Machine Learning and Applications (ICMLA)}}. \bibinfo{publisher}{ICMLA}, \bibinfo{address}{Jacksonville}, \bibinfo{pages}{1031--1038}.
\newblock
\showISBNx{1946-0759}
\urldef\tempurl%
\url{https://doi.org/10.1109/ICMLA58977.2023.00153}
\showDOI{\tempurl}


\bibitem[\protect\citeauthoryear{Shehory and Kraus}{Shehory and Kraus}{1998}]%
        {RN460}
\bibfield{author}{\bibinfo{person}{Onn Shehory} {and} \bibinfo{person}{Sarit Kraus}.} \bibinfo{year}{1998}\natexlab{}.
\newblock \showarticletitle{Methods for task allocation via agent coalition formation}.
\newblock \bibinfo{journal}{\emph{Artificial Intelligence}} \bibinfo{volume}{101}, \bibinfo{number}{1} (\bibinfo{year}{1998}), \bibinfo{pages}{165--200}.
\newblock
\showISSN{0004-3702}
\urldef\tempurl%
\url{https://doi.org/10.1016/S0004-3702(98)00045-9}
\showDOI{\tempurl}


\bibitem[\protect\citeauthoryear{Sobol'}{Sobol'}{1967}]%
        {SOBOL196786}
\bibfield{author}{\bibinfo{person}{Ilya Sobol'}.} \bibinfo{year}{1967}\natexlab{}.
\newblock \showarticletitle{On the distribution of points in a cube and the approximate evaluation of integrals}.
\newblock \bibinfo{journal}{\emph{U. S. S. R. Comput. Math. and Math. Phys.}} \bibinfo{volume}{7}, \bibinfo{number}{4} (\bibinfo{year}{1967}), \bibinfo{pages}{86--112}.
\newblock
\showISSN{0041-5553}
\urldef\tempurl%
\url{https://doi.org/10.1016/0041-5553(67)90144-9}
\showDOI{\tempurl}


\bibitem[\protect\citeauthoryear{Strickland}{Strickland}{2022}]%
        {RN199}
\bibfield{author}{\bibinfo{person}{Laura~Gail Strickland}.} \bibinfo{year}{2022}\natexlab{}.
\newblock \emph{\bibinfo{title}{Coordinating Team Tactics for Swarm-vs.-Swarm Adversarial Games}}.
\newblock Thesis. \bibinfo{school}{School of Interactive Computing, College of Computing}.
\newblock


\bibitem[\protect\citeauthoryear{Trianni}{Trianni}{2008}]%
        {RN459}
\bibfield{author}{\bibinfo{person}{Vito Trianni}.} \bibinfo{year}{2008}\natexlab{}.
\newblock \bibinfo{booktitle}{\emph{Evolutionary Robotics for Self-Organising Behaviours}}.
\newblock \bibinfo{publisher}{Springer Berlin Heidelberg}, \bibinfo{address}{Berlin, Heidelberg}, \bibinfo{pages}{47--59}.
\newblock
\showISBNx{978-3-540-77612-3}
\urldef\tempurl%
\url{https://doi.org/10.1007/978-3-540-77612-3_4}
\showDOI{\tempurl}


\bibitem[\protect\citeauthoryear{Vezhnevets, Osindero, Schaul, Heess, Jaderberg, Silver, and Kavukcuoglu}{Vezhnevets et~al\mbox{.}}{2017}]%
        {RN276}
\bibfield{author}{\bibinfo{person}{Alexander~Sasha Vezhnevets}, \bibinfo{person}{Simon Osindero}, \bibinfo{person}{Tom Schaul}, \bibinfo{person}{Nicolas Heess}, \bibinfo{person}{Max Jaderberg}, \bibinfo{person}{David Silver}, {and} \bibinfo{person}{Koray Kavukcuoglu}.} \bibinfo{year}{2017}\natexlab{}.
\newblock \showarticletitle{Feudal networks for hierarchical reinforcement learning}. In \bibinfo{booktitle}{\emph{International conference on machine learning}}. \bibinfo{publisher}{PMLR}, \bibinfo{address}{Sydney}, \bibinfo{pages}{3540--3549}.
\newblock
\showISBNx{2640-3498}


\bibitem[\protect\citeauthoryear{Wang, Li, Gao, and Xie}{Wang et~al\mbox{.}}{2021}]%
        {RN91}
\bibfield{author}{\bibinfo{person}{Baolai Wang}, \bibinfo{person}{Shengang Li}, \bibinfo{person}{Xianzhong Gao}, {and} \bibinfo{person}{Tao Xie}.} \bibinfo{year}{2021}\natexlab{}.
\newblock \showarticletitle{UAV Swarm Confrontation Using Hierarchical Multiagent Reinforcement Learning}.
\newblock \bibinfo{journal}{\emph{International Journal of Aerospace Engineering}}  \bibinfo{volume}{2021} (\bibinfo{year}{2021}), \bibinfo{pages}{1--12}.
\newblock
\showISSN{1687-5966}
\urldef\tempurl%
\url{https://doi.org/10.1155/2021/3360116}
\showDOI{\tempurl}


\bibitem[\protect\citeauthoryear{Watson and Pollack}{Watson and Pollack}{2005}]%
        {RN461}
\bibfield{author}{\bibinfo{person}{Richard~A. Watson} {and} \bibinfo{person}{Jordan~B. Pollack}.} \bibinfo{year}{2005}\natexlab{}.
\newblock \showarticletitle{Modular Interdependency in Complex Dynamical Systems}.
\newblock \bibinfo{journal}{\emph{Artificial Life}} \bibinfo{volume}{11}, \bibinfo{number}{4} (\bibinfo{year}{2005}), \bibinfo{pages}{445–457}.
\newblock
\showISSN{1064-5462}
\urldef\tempurl%
\url{https://doi.org/10.1162/106454605774270589}
\showDOI{\tempurl}


\bibitem[\protect\citeauthoryear{Whitley}{Whitley}{1991}]%
        {Whitley1991}
\bibfield{author}{\bibinfo{person}{L.~Darrell Whitley}.} \bibinfo{year}{1991}\natexlab{}.
\newblock \showarticletitle{Fundamental Principles of Deception in Genetic Search}.
\newblock In \bibinfo{booktitle}{\emph{Foundations of Genetic Algorithms}}, \bibfield{editor}{\bibinfo{person}{Gregory~J.E. Rawlins}} (Ed.). Vol.~\bibinfo{volume}{1}. \bibinfo{publisher}{Elsevier}, \bibinfo{address}{San Mateo}, \bibinfo{pages}{221--241}.
\newblock
\showISSN{1081-6593}
\urldef\tempurl%
\url{https://doi.org/10.1016/B978-0-08-050684-5.50017-3}
\showDOI{\tempurl}


\bibitem[\protect\citeauthoryear{Yang, Borovikov, and Zha}{Yang et~al\mbox{.}}{2020}]%
        {RN191}
\bibfield{author}{\bibinfo{person}{Jiachen Yang}, \bibinfo{person}{Igor Borovikov}, {and} \bibinfo{person}{Hongyuan Zha}.} \bibinfo{year}{2020}\natexlab{}.
\newblock \showarticletitle{Hierarchical Cooperative Multi-Agent Reinforcement Learning with Skill Discovery}. In \bibinfo{booktitle}{\emph{International Foundation for Autonomous Agents and Multiagent Systems}}. \bibinfo{publisher}{International Foundation for Autonomous Agents and Multiagent Systems}, \bibinfo{address}{Auckland}, \bibinfo{pages}{1566--1574}.
\newblock
\showISBNx{1450375189}
\urldef\tempurl%
\url{https://doi.org/10.5555/3398761.3398941}
\showDOI{\tempurl}


\end{thebibliography}


\newpage
\appendix
\section{Supplementary Material}

\subsection{Formal Problem Definition}

\subsubsection{Agents and State Representation}

We simulate a two-dimensional open battlefield consisting of Blue defenders $B$, Red attackers $R$, and a stationary target $T$. Each agent $i$ at time $t$ is represented by:
$$
s^i_t = \big(x^i_t, y^i_t, \theta^i_t, v^i_t, \ell^i_t\big),
$$
where:  
\begin{itemize}
    \item $(x^i_t, y^i_t) \in \mathbb{R}^2$ is the agent position, 
    \item $\theta^i_t \in S^1$ is the heading (radians),
    \item $v^i_t \in [v^i_{\min}, v^i_{\max}]$ is the speed,
    \item $\ell^i_t \in \{0,1\}$ indicates if the agent is alive.
\end{itemize}

The full environment state at time $t$ is:
$$S_t = \{ s^i_t \mid i \in B \cup R \cup T \}$$

\subsubsection{Observation Space}

All agents have full observability in the current work:
$$O_t = S_t$$
although we note that partial observability and communication constraints will be incorporated in future work.

\subsubsection{Action Space} 
Actions are represented as:
    $$a^i = (\text{turn}^i, \text{accel}^i) \in [-1,1]^2$$

\subsubsection{Motion Dynamics}

Agent dynamics are updated at each time step $\Delta t$ according to:
\begin{align*}
\theta^i_{t+\Delta t} &= \theta^i_t + \omega_\text{max} \cdot \text{turn}^i \cdot \Delta t, \\
v^i_{t+\Delta t} &= \text{clip}\big(v^i(t) + \text{accel}^i \cdot \Delta t, v_\text{min}, v_\text{max}\big), \\
x^i_{t+\Delta t} &= x^i_t + v^i_{t+\Delta t} \cos \theta^i_{t(+\Delta t} \cdot \Delta t, \\
y^i_{t+\Delta t} &= y^i_t + v^i_{t+\Delta t} \sin \theta^i_{t+\Delta t} \cdot \Delta t.
\end{align*}

Intuitively, actions express steering and acceleration intent, while domain parameters ensure feasible turn and speed rates compatible with realistic UAV motion.

\subsubsection{Collisions and Removal}
Two agents $i$ and $j$ collide if:
$$\|(x^i_t,y^i_t) - (x^j_t,y^j_t)\| \le r^i + r^j$$
where $r^i$ is the physical radius of agent $i$. 

If a Blue and Red collide, both are destroyed:
$$\ell^i_{t+1} = \ell^j_{t+1} = 0$$

If a Red collides with a target, both are destroyed:
$$\ell^r_{t+1} = \ell^k_{t+1} = 0 \quad r\in R, k\in T$$

\subsubsection{Episode Commencement}

For the Target $T$, position is $(0, 0)$, speed $0$ and heading $0$.

For each red agent $r \in R$:
\begin{itemize}
    \item Position is randomly assigned around the target with radius $[0.75, 0.95]$,
    \item Speed is randomly assigned in range $[v_{min}, v_{max}]$ for red agents,
    \item Heading is toward the target.
\end{itemize}

For each blue agent $b \in B$:
\begin{itemize}
    \item Position is derived from the position of the warehouse in the agent's chromosome,
    \item Speed is $v_{min}$ for blue agents,
    \item Heading is randomly assigned in range $[0, 2\pi]$.
\end{itemize}

\subsubsection{Action Policies}

Each agent type has its own action policy.

Targets do not move, however for consistency they still require an action. Consequently, their action each time step is always $a^T_t = (0,0)$ and $v^T_{max}=0$.

Red agents follow a predefined sinusoidal trajectory toward the target. Each trajectory is randomly created with amplitude uniformly drawn from $[-0.2,0.2]$ and frequency drawn from $[1-4]$.

Blue agents, $b \in B$, each use a heuristic $H^b$ for their action policy assigned according to their chromosome. The heuristic takes as input the observation state $O_t$ expressed in relative terms for its agent, the agent's parameter $P^b$ as assigned by their chromosome and the current time step:
$$a^b_t = \pi^b_H(O^b_t, P^b, t)$$

The heuristic may or may not be deterministic.

\subsubsection{Episode Termination}

Episodes terminate, at $t^\star$, when any of:
$$\sum_{i\in B}\ell^i_t = 0 \quad \text{(all defenders destroyed)}$$
$$\sum_{i\in R}\ell^i_t = 0 \quad \text{(all attackers destroyed)}$$
$$\exists k\in T : \ell^k_t = 0 \quad \text{(a target is reached)}$$
$$t \ge T_{\max} \quad \text{(time limit)}$$

\subsubsection{Objective}

The Blue team objective is to prevent attackers from reaching targets and eliminate all Red agents. Blue receives a binary terminal payoff:
$$
U_B(S_{t^\star}) =
\begin{cases}
1 & \text{if } \sum_{i\in R} \ell^i_{t^\star}=0 \text{ and all targets remain alive},\\[4pt]
0 & \text{otherwise}.
\end{cases}
$$

We do not explicitly define, or need, a payoff for the Red team.

\subsubsection{Discussion}

This formulation makes explicit the controllable action space, motion dynamics, collision-induced elimination and episode termination conditions. It provides a precise grounding for discussing heuristic design, strategy composition, and the GA-DP optimization procedure in the main paper.

\subsection{Heuristics}

\begin{table}[htbp]
\caption{Suite of low-level policies ("heuristics").}
\label{tab:Heuristics}
\makegapedcells
\begin{tabular}{p{0.2\linewidth} p{0.70\linewidth}}
\toprule
Heuristic & Description \\\midrule
Straight & Fly straight using launch position and heading \\ 
Chase & Fly directly toward allocated attacker \\
Circle & Circle target with random radius\\
Predictive \newline Interception & Fly toward allocated attacker's straight-line predicted position \\
Rotation \newline Patrol & Fly in figure-eight pattern around target \\
Zone \newline Defense & Area around target divided into zones which allocated agents patrol \\
Defensive \newline Perimeter & Fly in either circle or polygon around target \\
Dynamic \newline Positioning & Defenders dynamically reposition based on the changing threat landscape \\
Threat Level \newline Assessment & Agents move toward attackers with closest proximity to and speed toward target \\
Adaptive \newline Patrol Radius & Agents patrol a circular path around the target, with radius based on attacker proximity \\
Threat-Weighted \newline Positioning & Agents position within an assigned zone, and shift toward the most threatening attacker \\
\bottomrule
\end{tabular}
\end{table}

Table \ref{tab:Heuristics} summarises the suite of low-level heuristic policies available to Blue agents. These heuristics constitute the action-generating primitives used throughout Stages 1–4, and form the building blocks from which full defender chromosomes are constructed

\subsection{Genetic Algorithm for Stage 1}

\begin{table}[h!]
\centering
\caption{GA parameters used for evolving small-scale Blue team strategies.}\label{tab:GAparams}
\begin{tabular}{lc}
\hline
Parameter & Value \\
\hline
Population size $N_\text{pop}$ & 1024 \\
Initialisation & Random \\
Generations $G$ & 128 \\
Elitism rate $r_e$& 0.2 \\
Single-point crossover rate $p_c$ & 0.7 \\
Mutation rate $p_m$ & 0.1 \\
Tournament size $k$ & 2 \\
\hline
\end{tabular}
\end{table}

Table \ref{tab:GAparams} summarises key parameters used in the Genetic Algorithm applied to small-scale team configurations in Stage 1. Testing of Sobol~\cite{SOBOL196786} and Latin Hypercube~\cite{LHC} sampling population initialisation methods showed immaterial difference in performance over purely random initialisation.

\subsection{Proof of Proposition 1}

\begin{proof}
The algorithm stores one entry for each pair of red and blue counts, giving $O(RB)$ entries. 
For a subgroup of $n \le R$ attackers, the algorithm enumerates all partitions of $n$ into groups of size at most $k$. 
The number of such restricted partitions grows as $O(n^{k-1})$ for fixed $k$~\cite{Nathanson2000}. 
Since $k$ is constant, this step is polynomial and dominated by the subsequent split factor. 
In the DP, enumerating possible splits of subgroups across $O(R^2)$ intervals contributes a factor of $O(R^2)$ overall. 
For each red split, defenders must be distributed across subgroups, which requires $O(B^2)$ time. 

For each DP entry, this leads to at most $O(R^2 B^2)$ work. 
Since there are $O(RB)$ entries in total, the overall runtime is $O(R^3 B^3)$. 
Because $k$ is fixed in advance, the partition step contributes only polynomial factors, and hence the overall runtime is polynomial in both $R$ and $B$.
\end{proof}

\begin{remark}
There are two distinct sources of polynomial overhead in the dynamic program. 
First, the enumeration of restricted partitions grows as $O(n^{k-1})$ for subgroups of size $n$, which remains polynomial for fixed $k$. 
Second, the enumeration of contiguous splits of attackers contributes $O(R^2)$ possibilities overall. 
Since the latter dominates asymptotically, the complexity is ultimately governed by the $O(R^3 B^3)$ bound.
\end{remark}

\subsection{Learning Dynamics Comparison}

\begin{figure}
    \centering
    \includegraphics[width=0.8\linewidth]{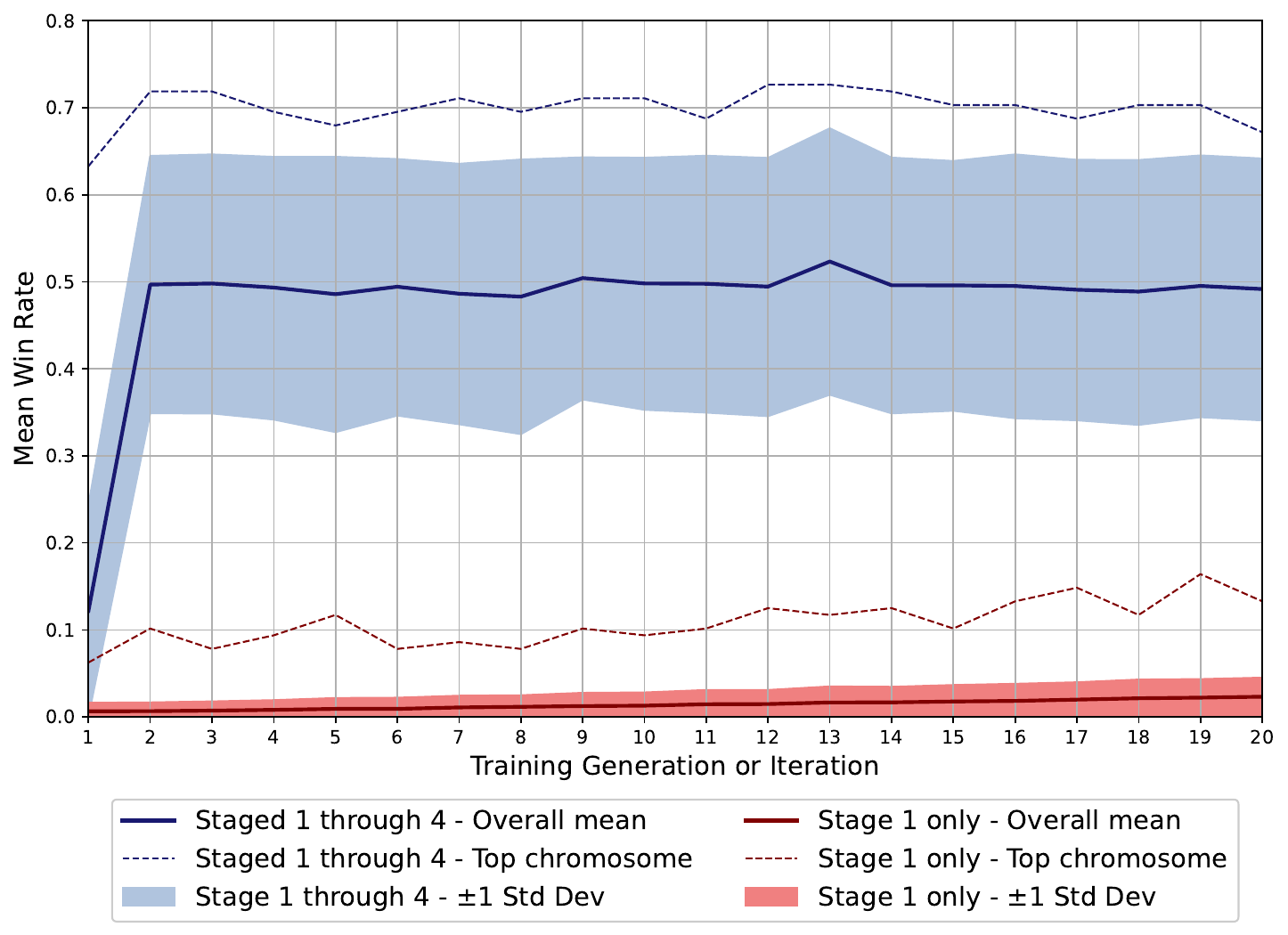}
    \caption{Learning dynamics in a 10R vs. 15B scenario. GA alone (Stage 1) exhibits gradual improvement over iterations, whereas the full Stage 1–4 pipeline rapidly exploits priors.}
    \label{fig:learning_comparison}
\end{figure}

Figure \ref{fig:learning_comparison} complements the ablation study by illustrating the learning dynamics in a moderate 10R vs. 15B scenario. Here, we compare the GA alone (Stage 1) with the full Stage 1–4 pipeline. The line plot shows that iterative Stage 1–4 learning rapidly improves win rates by exploiting priors from small-scale building blocks, whereas the GA alone exhibits slower convergence and lower final performance. This highlights that, beyond ablation effects, the staged pipeline accelerates the discovery of effective strategies during repeated simulations.

\subsection{Impact of Heuristic Composition on Performance}

To assess the contribution of individual heuristics, we performed pruning experiments on the evolved population. Removing the four least frequent heuristics produced negligible changes in performance across both small- and large-scale simulations, confirming redundancy. Removing the single most frequent heuristic in addition to the four least frequent heuristics substantially reduced performance: small-scale scenarios only achieved perfect win-rates under highly imbalanced conditions (for example, 1R vs. 4B), while competitive matchups saw dramatic drops. Large-scale simulations amplified this effect. These results are expected, as removing the key heuristic deprives the population of the primary mechanism for coordinated success. Critically, this demonstrates that the GA is robust in discovering effective strategies when high-impact heuristics are available, and the observed performance drop reflects sensitivity to the removal of essential building blocks rather than a limitation of the evolutionary process itself.

\onecolumn
\subsection{LLM Prompts}

To create a suite of candidate heuristics for the drone defense simulation, we drew on multiple large language models to generate diverse behavioural strategies. Specifically, we used ChatGPT-4o, ChatGPT-4.1 and ChatGPT-4 Turbo \cite{openai-chatgpt}, along with Gemini 2.0 Flash and Gemini 2.5 Flash \cite{gemini}. Using multiple models helped capture differences in reasoning styles, output formats, and suggested approaches, increasing the diversity of candidate heuristics.  A representative prompt used with ChatGPT-4o is shown in Figure \ref{fig:representativePrompt}.

\begin{figure*}[h]
\begin{tcolorbox}[drop shadow southeast, enhanced, sharp corners]
\begin{quote}
My simulation has blue quadcopters defending a target from red quadcopters. If quadcopters from opposing teams collide, they are both destroyed. If a red quadcopter reaches the target they then red wins. Blue must destroy all red quadcopters to win.
\end{quote}

\begin{quote}
I would like you to propose 10 different strategies for the blue team.
\end{quote}

\hrule
\vspace{1em}

Here are 10 different strategies for the blue team to defend the target and counter the red quadcopters:

\subsubsection{1. \textbf{Chase and Intercept}}
\begin{itemize}
    \item Each blue quadcopter targets a specific red quadcopter and chases it until it is destroyed.
    \item Prioritize the closest red quadcopter to minimize interception time.
\end{itemize}

\subsubsection{2. \textbf{Defensive Perimeter}}
\begin{itemize}
    \item Blue quadcopters form a circular or polygonal perimeter around the target.
    \item They patrol the perimeter and intercept any red quadcopter that breaches it.
\end{itemize}

\subsubsection{3. \textbf{Zone Defense}}
\begin{itemize}
    \item Divide the area around the target into zones, and assign each blue quadcopter to a specific zone.
    \item Blue quadcopters only engage red quadcopters that enter their assigned zones.
\end{itemize}

\subsubsection{4. \textbf{Swarm Behavior}}
\begin{itemize}
    \item Blue quadcopters move as a cohesive group (swarm) to overwhelm individual red quadcopters.
    \item Use swarm algorithms like Boids or Reynold's rules for coordinated movement.
\end{itemize}

\subsubsection{5. \textbf{Predictive Interception}}
\begin{itemize}
    \item Predict the future position of red quadcopters based on their current velocity and heading.
    \item Blue quadcopters move to intercept these predicted positions rather than chasing directly.
\end{itemize}

\paragraph{}
Each of these strategies can be implemented by modifying the behavior of the blue quadcopters in the \texttt{\_\_call\_\_} method of the \texttt{HeuristicChaseRed} class. Let me know if you'd like to explore or implement any of these strategies!
\end{tcolorbox}
\caption{Representative prompt to generate heuristic ideas using ChatGPT 4-0. The full Python simulation code was provided as context. The reference to the class at the bottom of the conversation is an error and should have referred to an abstract Blue class instead. Such errors had to be detected and manually corrected in any generated code.}
\label{fig:representativePrompt}
\end{figure*}

\newpage
\subsection{Generated Code Example}

Following ideation, the LLMs were requested to generate Python code to implement some of the strategies. The full package of existing Python simulation was provided as context and the \textit{BlueHeuristicAbstract} base class utilised to ensure API consistency. Ultimately each of these heuristics was manually converted to JAX so the simulations could be executed at scale on GPU. Example code generated by ChatGPT-4.1 is shown in Figure \ref{figCode1}.

\begin{figure*}[h]
    \begin{tcolorbox}[drop shadow southeast, enhanced, sharp corners]
\begin{lstlisting}[language=Python]
class HeuristicThreatWeightedPositioning(BlueHeuristicAbstract):
    def __init__(self, blue_idx, warehouse_idx, param_idx):
        super().__init__(blue_idx, warehouse_idx, param_idx)
        self.zone_idx = param_idx  # assign zone by param_idx

    def get_action(self, obs, state, blue_params, red_params, target_params):
        blue_pos = obs['blues'][self.blue_idx]
        target_pos = obs['targets'][0]
        reds_pos = obs['reds']
        # Divide area into zones (e.g., pie slices around target)
        num_zones = 5  # or set to number of blues
        angle_per_zone = 2 * np.pi / num_zones
        my_angle = self.zone_idx * angle_per_zone
        zone_center = target_pos + 0.5 
            * np.array([np.cos(my_angle), np.sin(my_angle)])

        # Find reds in my zone (by angle from target)
        reds_vecs = reds_pos - target_pos
        reds_angles = np.arctan2(reds_vecs[:,1], reds_vecs[:,0])
        reds_angles = (reds_angles + 2 * np.pi) % (2 * np.pi)
        zone_start = my_angle
        zone_end = my_angle + angle_per_zone
        in_zone = (reds_angles >= zone_start) & (reds_angles < zone_end)
        reds_in_zone = reds_pos[in_zone] if np.any(in_zone) else []

        # If reds in zone, move toward the one closest to target (threat-weighted)
        if len(reds_in_zone) > 0:
            dists = [np.linalg.norm(red - target_pos) for red in reds_in_zone]
            threat_red = reds_in_zone[np.argmin(dists)]
            weighted_pos = 0.7 * threat_red + 0.3 * zone_center
        else:
            weighted_pos = zone_center

        direction = weighted_pos - blue_pos
        if np.linalg.norm(direction) > 1e-6:
            direction = direction / np.linalg.norm(direction)
        else:
            direction = np.zeros_like(direction)
        return direction * blue_params.max_speed
\end{lstlisting}
    \end{tcolorbox}
    \caption{Example LLM-guided (ChatGPT-4.1) code for the Threat-Weighted Positioning heuristic. Code comments provided by the LLM.} 
    \label{figCode1}
\end{figure*}


\end{document}